\documentclass[lettersize,journal]{IEEEtran}
\usepackage{amsmath,amsfonts}
\usepackage{algorithmic}
\usepackage{array}
\usepackage[caption=false,font=normalsize,labelfont=sf,textfont=sf]{subfig}
\usepackage{textcomp}
\usepackage{stfloats}
\usepackage{url}
\usepackage{verbatim}
\usepackage{graphicx}
\def\BibTeX{{\rm B\kern-.05em{\sc i\kern-.025em b}\kern-.08em
    T\kern-.1667em\lower.7ex\hbox{E}\kern-.125emX}}
\usepackage{balance}
\usepackage{bm}
\usepackage{amsmath,amsthm}
\usepackage{tikz}
\usepackage{xcolor,graphicx}
\definecolor{myblue}{RGB}{183,221,232} 
\definecolor{myorange}{RGB}{251,215,187} 
\definecolor{mygreen}{RGB}{27,119,120} 
\usepackage{float}
\usepackage[normalem]{ulem}
\usepackage{booktabs}

\usepackage{subfig}
\definecolor{blue_matlab}{RGB}{0,48,189}

\newtheorem{lemma}{Lemma}[section]
\newtheorem{corollary}{Corollary}[section]
\newtheorem{Proposition}{Proposition}[section]
\newtheorem{Remark}{Remark}[section]

\begin{document}
\title{Scalable Small-Signal Stability Assessment of Power Systems Based on Frequency-Domain Quadratic Constraints}
\author{Xiaoying Liu, Linbin Huang, Hangyu Chen, and Huanhai Xin\vspace{-3mm}
% \thanks{Manuscript created October, 2020; This work was developed by the IEEE Publication Technology Department. This work is distributed under the \LaTeX \ Project Public License (LPPL) ( http://www.latex-project.org/ ) version 1.3. A copy of the LPPL, version 1.3, is included in the base \LaTeX \ documentation of all distributions of \LaTeX \ released 2003/12/01 or later. The opinions expressed here are entirely that of the author. No warranty is expressed or implied. User assumes all risk.}
\thanks{This work was supported by the Smart Grid National Science and Technology Major Project of China under Grant 2026ZD0812903. 

The authors are with the College of Electrical Engineering, Zhejiang University, Hangzhou 310027, China. (E-mail: hlinbin@zju.edu.cn).}
}

\markboth{Journal of \LaTeX\ Class Files}%
{How to Use the IEEEtran \LaTeX \ Templates}

\maketitle

\begin{abstract}
Small-signal stability analysis of power electronics-dominated power systems is becoming increasingly challenging due to the large-scale integration of heterogeneous grid-following (GFL) and grid-forming (GFM) converters. Classical centralized methods, such as eigenvalue analysis and the generalized Nyquist criterion, provide direct stability assessment tools but exhibit limited scalability as the system size increases. This motivates scalable analysis methods for stability assessment and device-level diagnosis, and calls for less conservative stability conditions. To this end, this paper proposes a scalable small-signal stability assessment method with quantitative stability indices for multi-converter systems based on frequency-domain quadratic constraints (FQCs). The FQCs can be thought of as a special case of the integral quadratic constraint (IQC) for linear systems. It is shown that several existing stability conditions, including geometric conditions based on the Davis-Wielandt (DW) shell, numerical range, $x$-$z$ graph, scaled relative graph (SRG), mixed gain-phase condition, and passivity condition, are special cases of the FQC-based stability condition. This FQC-based formulation also clarifies the requirements for decentralized verification of these existing stability conditions. To further reduce conservatism, a full-multiplier FQC-based stability condition is developed, based on which an optimization problem is formulated for scalable stability certification and device-level diagnosis using quantitative stability indices without relying on graphical inspection. Combined with the mixed gain-phase condition, this optimization problem forms a hierarchical screening-diagnosis procedure that improves efficiency. Case studies demonstrate that the proposed method is less conservative than existing criteria and can effectively identify problematic devices associated with potential instability risks.
\end{abstract}

\begin{IEEEkeywords}
Frequency-domain quadratic constraint, hierarchical screening-diagnosis procedure, multi-converter systems, scalable small-signal stability assessment.
\end{IEEEkeywords}
\vspace{-2mm}

\section{Introduction}

\IEEEPARstart{M}{ODERN} power systems are undergoing a profound transformation driven by the large-scale integration of power electronics-interfaced resources, such as renewable generation, energy storage systems, and voltage-source converter-based HVDC links~\cite{ref_context_1}. As a result, system dynamics are no longer dominated solely by conventional synchronous generators (SGs), but are increasingly shaped by heterogeneous grid-following (GFL) and grid-forming (GFM) converters, whose dynamic characteristics differ substantially from those of SGs~\cite{ref_context_2}. The coexistence of these heterogeneous dynamics and their coupling through the power network can introduce new small-signal stability issues, complicating stability assessment in large-scale power systems.

Classical small-signal stability assessment methods, such as eigenvalue analysis~\cite{ref_eigen_CCM} and the generalized Nyquist criterion~\cite{ref_GNC}, provide direct and accurate tools for analyzing the stability of interconnected power systems. However, eigenvalue analysis requires a detailed state-space model of the entire closed-loop system. As the number of heterogeneous devices increases, such a centralized model becomes high-dimensional and cumbersome to construct, update, and analyze~\cite{ref_eigen_drawback}. In comparison, the generalized Nyquist criterion enables device or subsystem dynamics to be represented by frequency-domain impedance models, which can be obtained through frequency scanning without detailed internal information~\cite{ref_GNC_measure}. Nevertheless, it requires the evaluation of intricate characteristic loci, thereby limiting its scalability~\cite{ref_GNC_drawback}.
%Moreover, these centralized approaches often provide limited insight into which individual devices contribute to potential instability, making them less suitable for scalable stability assessment and device-level diagnosis in large-scale power electronics-dominated power systems.

Decentralized stability analysis methods address the scalability limitations of centralized approaches by relying on local device or subsystem dynamics rather than constructing and analyzing the entire closed-loop system~\cite{ref_decentralized_condition}. Representative examples include passivity-based analysis~\cite{ref_passivity}, the small-gain theorem~\cite{ref_Gain_condition}, and mixed gain-phase conditions~\cite{ref_Gain_Phase_1,ref_Gain_Phase_2}, which have been applied to study power electronics-dominated systems. Despite their scalability, these conditions may remain considerably conservative and may fail to certify stability in certain frequency ranges. 

%To reduce this conservatism, geometric stability conditions have recently attracted increasing attention. Among these approaches, the Davis-Wielandt (DW) shell provides a three-dimensional geometric representation of MIMO frequency-response matrices that retains richer gain- and phase-related information than lower-dimensional descriptions~\cite{ref_DW_shell}, such as the numerical range~\cite{ref_phantom_DW_shell}, the $x$-$z$ graph~\cite{ref_x_z_graph}, the scaled relative graph (SRG)~\cite{ref_SRG_definition,ref_SRG_decentralized}, gain-phase conditions, and passivity-based analysis. As shown in~\cite{ref_phantom_DW_shell}, these lower-dimensional descriptions are projections of the DW shell, and the DW shell condition is the least conservative compared with them. 

To reduce this conservatism, geometric stability conditions based on tools such as the Davis-Wielandt (DW) shell~\cite{ref_DW_shell}, the numerical range~\cite{ref_phantom_DW_shell}, the \(x\)-\(z\) graph~\cite{ref_x_z_graph}, and the scaled relative graph (SRG)~\cite{ref_SRG_definition,ref_SRG_decentralized} have recently attracted increasing attention. Among these tools, the DW shell provides a three-dimensional geometric representation of MIMO frequency-response matrices, whereas the other descriptions can be interpreted as lower-dimensional projections of the DW shell~\cite{ref_phantom_DW_shell}. Hence, the DW shell-based condition is generally the least conservative one among these geometric conditions. These geometric conditions have been developed for both centralized and decentralized stability assessment, which require geometric representations of both the device-side and network-side open-loop systems for graphical comparison and verification. For instance, recent studies have proposed decentralized SRG-based and $x$-$z$ graph-based stability certificates~\cite{ref_x_z_graph,ref_SRG_decentralized} for power systems by means of graph separation. 
However, graphical conditions might be difficult to check when a large number of devices are involved, which motivates scalable and quantitative stability conditions with minimum conservatism.

Recent studies have attempted to integrate various stability conditions within a unified stability assessment framework for power electronics-dominated power systems. For instance, the stability condition in~\cite{ref_unified_framework} can recover several existing criteria, including the small gain theorem, the passivity theorem, and the mixed small gain-phase conditions, which enables the construction of less conservative stability indicators. 
%However, this framework does not explicitly incorporate geometric stability conditions based on the DW shell and its lower-dimensional representations. 
Recently, Refs.~\cite{ref_FQC_2} and~\cite{ref_Hallinan_PartI} developed analysis methods that yield decentralized stability conditions for DC and AC grids based on the integral quadratic constraint (IQC) theory. 
% These developments motivate a unified and scalable FQC framework for seeking less conservative stability certificates and enabling quantitative device-level diagnosis in large-scale multi-converter systems.
However, it remains unclear whether there is a framework to generalize over graphical conditions (for instance,  based on DW shell, numerical range, and $x$-$z$ graph) and non-graphical conditions such as those in~\cite{ref_unified_framework} and~\cite{ref_Hallinan_PartI}. More importantly, the problem of device-level diagnosis remains unsolved when non-graphical conditions are used in multi-converter systems.

This paper proposes a frequency-domain quadratic constraint (FQC) based method for scalable small-signal stability assessment and device-level diagnosis of multi-converter systems. As will be shown below, the FQC is a special case of the well-known IQC, customized for linear systems, while it recovers many graphical and non-graphical stability conditions. The main contributions are summarized as follows.

\begin{enumerate}{}{}
\item{An FQC framework is developed to unify existing stability conditions, under which the requirements for decentralized verification and the relative conservatism of different conditions are systematically revealed.}
\item{A low-conservatism FQC-based stability condition is developed, enabling an optimization problem for stability certification and device-level diagnosis using quantitative stability indices without graphical inspection.}
\item{A hierarchical screening-diagnosis procedure is developed by integrating this optimization problem with the mixed gain-phase condition to improve the efficiency of scalable small-signal stability assessment.}
\end{enumerate}

The rest of the paper is organized as follows: Section~\ref{sec:Modeling} presents the modeling of multi-converter systems. Section~\ref{sec:existing_decentralized_stability_conditions} relates existing stability conditions to the FQC framework and discusses the requirements for their decentralized verification. Section~\ref{sec:screening-diagnosis} develops the optimization-based scalable stability assessment procedure. Detailed case analysis is provided in Section~\ref{sec:case}. Section~\ref{sec:conclusion} concludes the paper.

\it{Notation}\rm: Let $\mathbb{C}$, $\mathbb{C}^n$, and $\mathbb{C}^{p\times m}$ denote the sets of complex numbers, $n$-dimensional complex vectors, and $p\times m$ complex matrices, respectively, with the corresponding real sets denoted analogously by replacing $\mathbb{C}$ with $\mathbb{R}$. For a matrix $\bm{M}$, we use $\overline{\bm{M}}$ and $\bm{M}^*$ to denote its conjugate and conjugate transpose, respectively, and $\det(\bm{M})$ to denote its determinant. A complex matrix $\bm{M}$ can be decomposed as $\bm{M}=\Re(\bm{M})+j\Im(\bm{M})$, where $\Re(\bm{M}):=\frac{1}{2}(\bm{M}+\bm{M}^*)$ denotes the Hermitian part and $\Im(\bm{M}):=\frac{1}{2j}(\bm{M}-\bm{M}^*)$ denotes the skew-Hermitian part. The inner product of $\bm{u},\bm{y}\in\mathbb{C}^n$ is defined by $\langle\bm{u},\bm{y}\rangle:=\bm{u}^*\bm{y}$, and the Euclidean norm for $\bm{u}$ is defined by $\|\bm{u}\|=\sqrt{\langle\bm{u},\bm{u}\rangle}$. We use the compact notation $(z,\nu)\in\mathbb{C}\times\mathbb{R}_+$ to denote the point $(\Re(z),\Im(z),\nu)$ in a three-dimensional real coordinate space. We use $\bm{I}$ to denote an identity matrix of compatible dimension. We use $\otimes$ to denote the Kronecker product. We denote a diagonal (or block-diagonal) matrix $\bm{D}$ as $\text{diag}(D_1,\ldots, D_n)$ where $D_1,\ldots, D_n$ are the diagonal elements (or blocks).

\section{Modeling of Multi-Converter Power Systems}\label{sec:Modeling}

Fig.~\ref{fig:power_system} illustrates a typical multi-converter power system, where heterogeneous converters are interconnected through the power network. The system can be naturally decomposed into the device and network sides, where the former includes converters with different control structures and parameters (e.g., GFL or GFM), and the latter consists of transmission lines and conventional loads.

\begin{figure}[t]
\centering
\includegraphics[width=0.8\columnwidth]{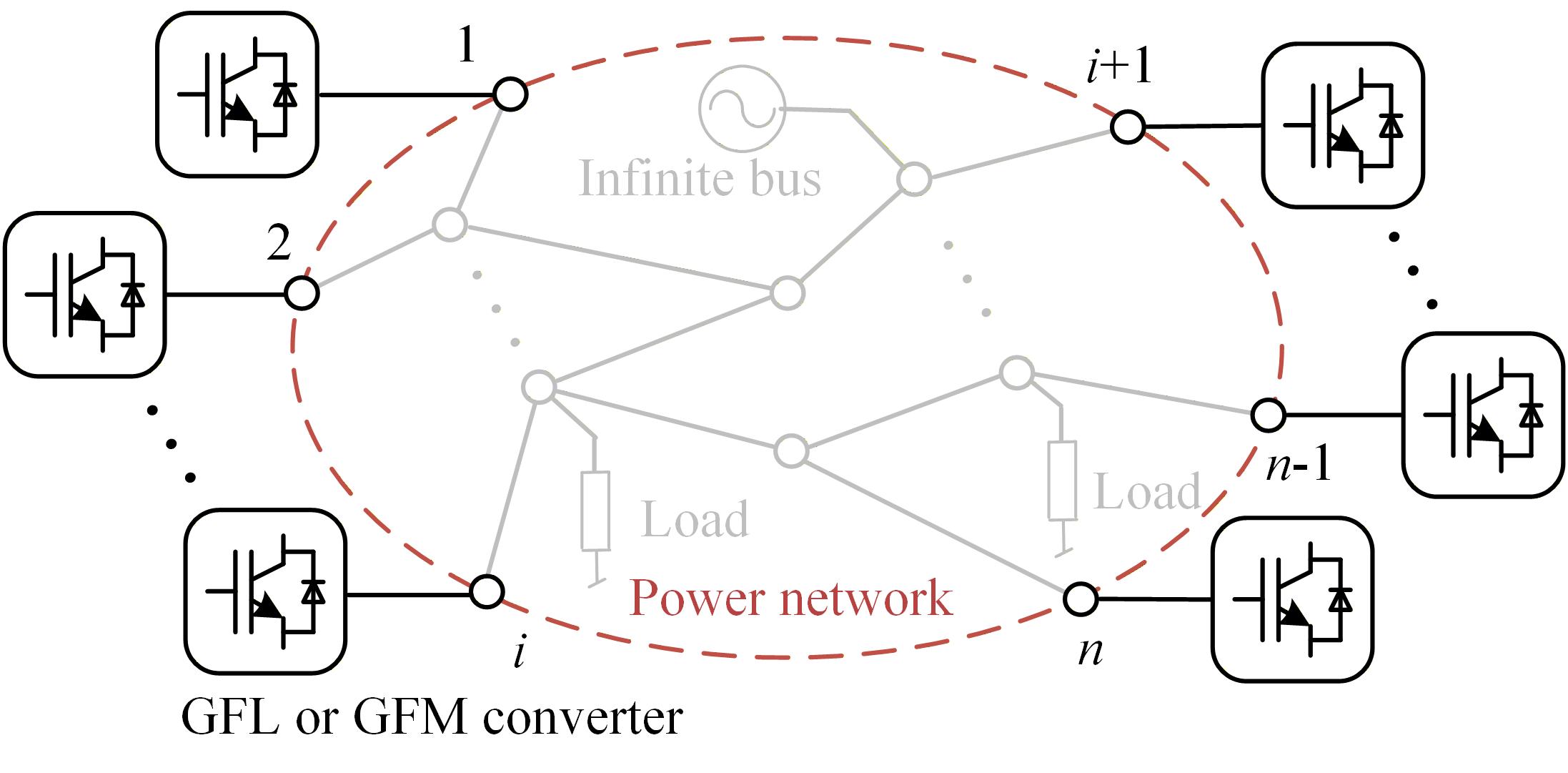}
\vspace{-4mm}
\caption{Illustration of a multi-converter power system.}
\label{fig:power_system}
\end{figure}

The small-signal dynamics of both sides are represented by frequency-domain admittance models, which describe the voltage-current perturbation relationship in the \(dq\) coordinate. In particular, the admittance model of all converters can be expressed by a block-diagonal matrix
\begin{equation}\label{Equ_converter_model}
\bm{Y}_c(s)=\text{diag}(\bm{Y}_{c,1}(s),\cdots,\bm{Y}_{c,n}(s)),
\end{equation}
where $\bm{Y}_{c,i}(s)$ denotes the $2\times 2$ admittance matrix of the $i$-th converter in a global $dq$ coordinate ($i=1,\cdots,n$).
The admittance matrix $\bm{Y}_g(s)$ and the impedance matrix $\bm{Z}_g(s) = \bm{Y}^{-1}_g(s)$ in the global $dq$ coordinate represent the small-signal dynamics of the power network, including the resistors, inductors, and shunt capacitors of transmission lines, as well as conventional loads. Note that the power network matrix $\bm{Y}_g(s)$ is obtained by Kron reduction of the original power network while retaining the converter buses. In the small-signal model, infinite buses can be treated as grounded because their voltage perturbations are zero. The detailed derivations of $\bm{Y}_{c,i}(s)$ and $\bm{Y}_g(s)$ can be found in~\cite{ref_Gain_Phase_2,ref_modeling}.

Thus, the multi-converter system is formulated as the negative feedback interconnection of the open-loop systems $\bm{Y}_c(s)$ and $\bm{Z}_g(s)$, as shown in Fig.~\ref{fig:interconnection}. Conventional centralized methods assess stability by checking whether the characteristic equation $\det(\bm{I}+\bm{Y}_c(s)\bm{Z}_g(s))=0$ has right-half-plane solutions. In contrast, scalable stability conditions certify stability through local converter-level conditions and a compatible network-side condition, as illustrated in the next section.

\begin{figure}
\centering
\includegraphics[width=\columnwidth]{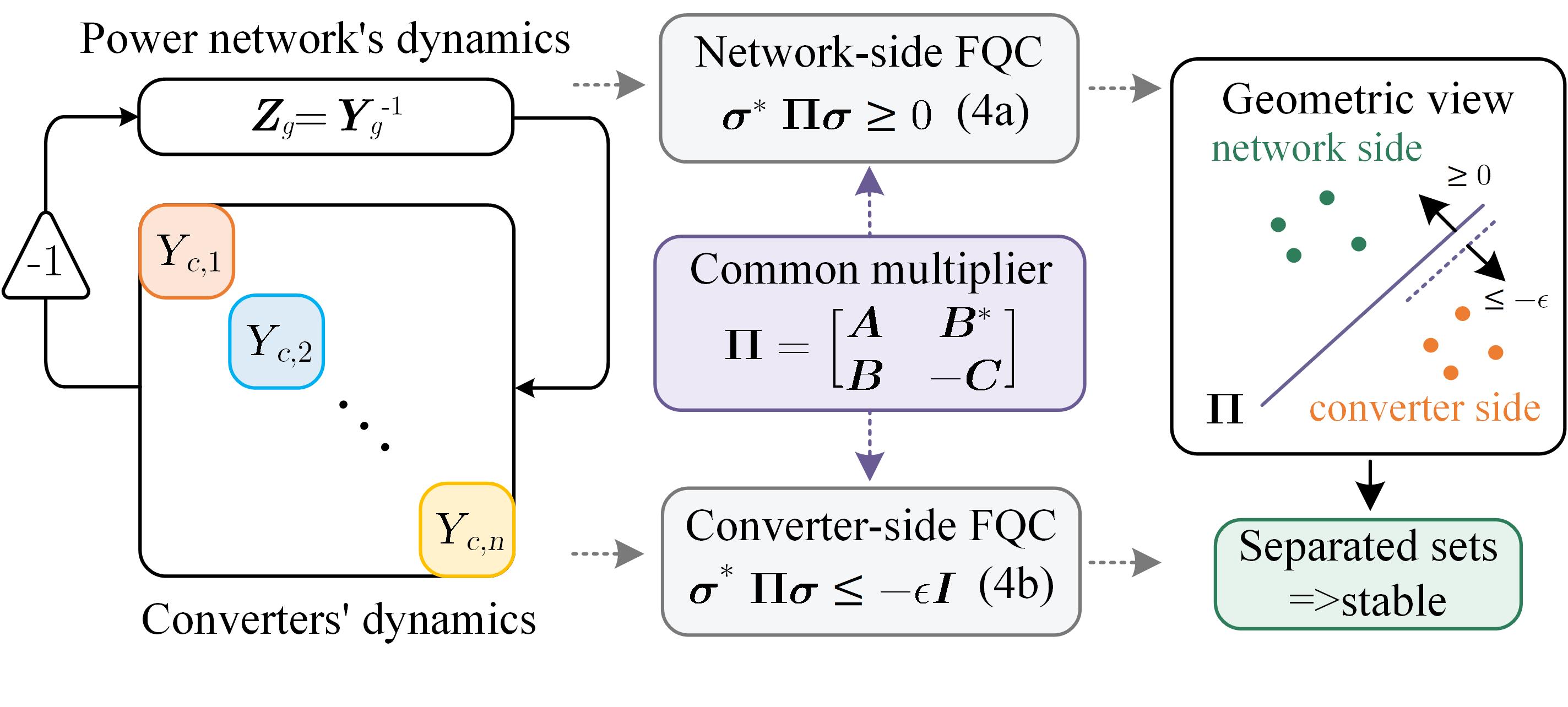}
\vspace{-10mm}
\caption{FQC-based stability condition for the closed-loop system.}
\label{fig:interconnection}
\vspace{-2mm}
\end{figure}

\section{Stability Conditions Covered by FQCs}\label{sec:existing_decentralized_stability_conditions}

This section first introduces the FQC-based stability condition. Then, complex matrix analysis tools associated with existing stability analysis methods are introduced to support the subsequent comparative discussion. Finally, we show how various existing stability conditions can be recovered as special cases of the FQC framework and discuss the requirements for their decentralized verification.

\subsection{FQC-based Stability Condition}

An integral quadratic constraint (IQC) on an input--output pair $(\bm w,\bm z)$ can be expressed in the frequency domain as~\cite{ref_FQC_1}
\begin{equation}\label{Equ_IQC_background}
\int_{-\infty}^{\infty}\begin{bmatrix}\hat{\bm w}(j\omega)\\\hat{\bm z}(j\omega)\end{bmatrix}^{*}\bm\Pi(j\omega)\begin{bmatrix}\hat{\bm w}(j\omega)\\\hat{\bm z}(j\omega)\end{bmatrix}\,d\omega \geq 0,
\end{equation}
where \(\hat{\bm w}\) and \(\hat{\bm z}\) denote the Fourier transforms of \(\bm w\) and \(\bm z\), respectively, and \(\bm\Pi(j\omega)\) is Hermitian. For an LTI subsystem, $\hat{\bm z}(j\omega)=\bm G(j\omega)\hat{\bm w}(j\omega)$. Substituting this relation into~\eqref{Equ_IQC_background} gives $\int_{-\infty}^{\infty}\hat{\bm w}^{*}(j\omega)\bm M(j\omega)\hat{\bm w}(j\omega)\,d\omega\geq0$, where \(\bm M(j\omega)=\bigl[\begin{smallmatrix}\bm I\\\bm G(j\omega)\end{smallmatrix}\bigr]^{*}\bm\Pi(j\omega)\bigl[\begin{smallmatrix}\bm I\\\bm G(j\omega)\end{smallmatrix}\bigr]\). If $\bm M(j\omega)\succeq0$ at every frequency, the integral inequality follows. We refer to such pointwise quadratic matrix inequalities as FQCs. In the following lemma, \(\bm G(j\omega)=\tau\bm Z_g(j\omega)\) is used to formulate the network-side FQC, while a strict FQC is imposed on the converter model using the same multiplier \(\bm\Pi(j\omega)\). This leads to a stability condition for the interconnected network and converter models, following the result established in~\cite{ref_FQC_2}.

\begin{lemma}[FQC-based stability condition]  \label{lem:QC_stability}
Suppose that the open-loop systems $\bm{Y}_c(s)$ and $\bm{Z}_g(s)$ are stable. Then, the negative feedback interconnection of $\bm{Y}_c(s)$ and $\bm{Z}_g(s)$ is stable if for all $\omega\in[0,\infty)$ and $\tau\in(0,1]$, there exist a Hermitian matrix $\bm{\Pi}(j\omega,\tau)$ and a scalar $\epsilon(\omega,\tau)>0$ such that 
\begin{subequations}
\begin{align}
\left[\begin{array}{c}\bm{I}\\ \tau \bm{Z}_g(j\omega)\end{array}\right]^{*}\,\bm{\Pi}(j\omega,\tau)\,\left[\begin{array}{c}\bm{I}\\ \tau \bm{Z}_g(j\omega)\end{array}\right] &\succeq 0,
\label{IQC_1}\\
\left[\begin{array}{c}-\bm{Y}_c(j\omega) \\ \bm{I} \end{array}\right]^{*}\,\bm{\Pi}(j\omega,\tau)\,\left[\begin{array}{c}-\bm{Y}_c(j\omega) \\ \bm{I} \end{array}\right] &\preceq -\epsilon(\omega,\tau)\bm{I}.
\label{IQC_2}
\end{align}
\end{subequations}
\end{lemma}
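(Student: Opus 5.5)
The plan is a homotopy argument in $\tau$ combined with the generalized Nyquist criterion, which is the standard route for IQC-type results adapted here to the pointwise (FQC) setting. Consider the family of loops obtained by interconnecting $\bm{Y}_c(s)$ with $\tau\bm{Z}_g(s)$ in negative feedback, $\tau\in[0,1]$, with return difference $\det(\bm{I}+\tau\bm{Y}_c(s)\bm{Z}_g(s))$. At $\tau=0$ the loop reduces to the open-loop systems, which are stable by assumption. Because both open-loop systems are stable, the generalized Nyquist criterion says the interconnection at $\tau$ is stable if and only if the closed curve $\omega\mapsto\det(\bm{I}+\tau\bm{Y}_c(j\omega)\bm{Z}_g(j\omega))$, $\omega\in\mathbb{R}\cup\{\infty\}$, has zero winding number about the origin. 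This curve depends continuously on $\tau$. Hence it suffices to show that it never passes through the origin for any $\tau\in(0,1]$ and any $\omega$; the winding number is then constant in $\tau$ and equals its value $0$ at $\tau=0$. Conjugate symmetry of real-rational transfer matrices lets us restrict to $\omega\ge 0$, matching the statement.

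The key step is pointwise nonsingularity. Fix $\omega\ge0$ and $\tau\in(0,1]$, and suppose $(\bm{I}+\tau\bm{Y}_c\bm{Z}_g)\bm{v}=0$ for some $\bm{v}\neq 0$. It is cleaner to work with the signals in the loop. Let $\bm{u}:=\bm{v}$ and $\bm{y}:=\tau\bm{Z}_g\bm{u}$, so that $\bm{u}=-\bm{Y}_c\bm{y}$. Note $\bm{y}\neq0$, since otherwise $\bm{u}=-\bm{Y}_c\bm{y}=0$. Stack $\bm{x}:=[\bm{u};\bm{y}]$. Then $\bm{x}=[\bm{I};\tau\bm{Z}_g]\bm{u}$, and \eqref{IQC_1} gives $\bm{x}^*\bm{\Pi}\bm{x}\ge0$. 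At the same time $\bm{x}=[-\bm{Y}_c;\bm{I}]\bm{y}$, and \eqref{IQC_2} gives $\bm{x}^*\bm{\Pi}\bm{x}\le-\epsilon\|\bm{y}\|^2<0$. This contradiction shows that $\det(\bm{I}+\tau\bm{Y}_c(j\omega)\bm{Z}_g(j\omega))\neq0$.

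The main obstacle is the point $\omega=\infty$ together with uniformity. The curve must stay away from the origin on the compactified axis, and the homotopy must be well posed. The pointwise argument only covers finite $\omega$. For $\omega\to\infty$, I would use properness: $\bm{Y}_c(j\infty)$ and $\bm{Z}_g(j\infty)$ exist. If the multiplier can be chosen to extend to $\omega=\infty$ (for example bounded and continuous, as implicitly in~\cite{ref_FQC_2}), the same contradiction applies there. Otherwise I would assume well-posedness, i.e.\ $\det(\bm{I}+\tau\bm{Y}_c(j\infty)\bm{Z}_g(j\infty))\neq0$. This holds trivially when either admittance is strictly proper, which is the typical situation for the line and converter models here. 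With nonvanishing on the compact set $[0,\infty]\times[0,1]$, continuity of the determinant in $(\omega,\tau)$ gives a uniform distance from the origin, so the winding number is invariant. The closed loop at $\tau=1$ is therefore stable.
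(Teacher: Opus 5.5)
Your proposal is correct and follows the paper's proof essentially step by step. The paper uses the same contradiction: it sets $\bm z=\tau\bm Z_g\bm w$, so $\bm w=-\bm Y_c\bm z$, and tests the stacked vector against \eqref{IQC_1} and \eqref{IQC_2} to get $\det(\bm I+\tau\bm Y_c(j\omega)\bm Z_g(j\omega))\neq0$. It then uses the same $\tau$-homotopy with the generalized Nyquist criterion to conclude closed-loop stability.

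Your extra care fills in points the paper skips: showing $\bm y\neq0$, using conjugate symmetry for $\omega<0$, and requiring well-posedness at $\omega=\infty$ for every $\tau$ together with compactness. The $\omega=\infty$ condition is genuinely needed, not pedantic. Take the scalar pair $Y_c(s)=-(2s+1)/(s+2)$, $Z_g(s)=1$. Then $1+\tau Y_c(j\omega)\neq0$ for all finite $\omega$ and $\tau\in(0,1]$, so suitable pointwise $\bm\Pi$ exist. Yet $1+Y_cZ_g=(1-s)/(s+2)$ has a zero at $s=1$: the root passes through $\omega=\infty$ at $\tau=1/2$.
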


\begin{proof}
The proof is given in Appendix~\ref{app:proof_1}.
\end{proof}

For subsequent analysis, it is convenient to parameterize the FQC multiplier \(\bm{\Pi}(j\omega,\tau)\) in a block-partitioned form \(\bm{\Pi}(j\omega,\tau)=\bigl[\begin{smallmatrix}\bm{A}(\omega,\tau) & \bm{B}(\omega,\tau)^* \\ \bm{B}(\omega,\tau) & -\bm{C}(\omega,\tau)\end{smallmatrix}\bigr]\), which facilitates the expansion of~\eqref{IQC_1} and~\eqref{IQC_2} and leads to the following corollary.

\begin{corollary}[ ]  \label{coro:QC_stability}
Suppose that the open-loop systems $\bm{Y}_c(s)$ and $\bm{Z}_g(s)$ are stable. Then, the negative feedback interconnection of $\bm{Y}_c(s)$ and $\bm{Z}_g(s)$ is stable if for all $\omega\in[0,\infty)$ and $\tau\in(0,1]$, there exist $\bm{A}(\omega,\tau)=\bm{A}(\omega,\tau)^*$, $\bm{B}(\omega,\tau)$, $\bm{C}(\omega,\tau)=\bm{C}(\omega,\tau)^*$ and $\epsilon(\omega,\tau)>0$ such that 
\begin{subequations}
\begin{align}
\begin{aligned}
& \bm{A}(\omega,\tau)+\tau\bm{Z}_g(j\omega)^*\bm{B}(\omega,\tau)+\tau\bm{B}(\omega,\tau)^*\bm{Z}_g(j\omega)\\
& -\tau^2\bm{Z}_g(j\omega)^*\bm{C}(\omega,\tau)\bm{Z}_g(j\omega) \succeq 0,
\end{aligned}
\label{IQC_3}\\
\begin{aligned}
& \bm{Y}_c(j\omega)^*\bm{A}(\omega,\tau)\bm{Y}_c(j\omega)-\bm{B}(\omega,\tau)\bm{Y}_c(j\omega)\\
& -\bm{Y}_c(j\omega)^*\bm{B}(\omega,\tau)^*-\bm{C}(\omega,\tau)\preceq -\epsilon(\omega,\tau)\bm{I}.
\end{aligned}
\label{IQC_4}
\end{align}
\end{subequations}
\end{corollary}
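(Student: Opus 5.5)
The corollary follows from Lemma~\ref{lem:QC_stability} by substitution: with a block-partitioned multiplier, conditions \eqref{IQC_3}--\eqref{IQC_4} are exactly \eqref{IQC_1}--\eqref{IQC_2} written out. Given $\bm{A}(\omega,\tau)$, $\bm{B}(\omega,\tau)$, $\bm{C}(\omega,\tau)$ and $\epsilon(\omega,\tau)$ satisfying \eqref{IQC_3}--\eqref{IQC_4}, I will define
\begin{equation*}
\bm{\Pi}(j\omega,\tau):=\begin{bmatrix}\bm{A}(\omega,\tau) & \bm{B}(\omega,\tau)^*\\ \bm{B}(\omega,\tau) & -\bm{C}(\omega,\tau)\end{bmatrix}.
\end{equation*}
This matrix is Hermitian because $\bm{A}=\bm{A}^*$, $\bm{C}=\bm{C}^*$, and the off-diagonal blocks are conjugate transposes of each other. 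I will then verify the two hypotheses of Lemma~\ref{lem:QC_stability} with the same $\epsilon(\omega,\tau)$.

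For the network-side condition, I multiply out the quadratic form in \eqref{IQC_1} using block matrix multiplication. Dropping the arguments $(\omega,\tau)$ and $j\omega$,
\begin{equation*}
\begin{bmatrix}\bm{I}\\ \tau\bm{Z}_g\end{bmatrix}^{*}\bm{\Pi}\begin{bmatrix}\bm{I}\\ \tau\bm{Z}_g\end{bmatrix}
=\bm{A}+\tau\bm{B}^*\bm{Z}_g+\tau\bm{Z}_g^*\bm{B}-\tau^2\bm{Z}_g^*\bm{C}\bm{Z}_g.
\end{equation*}
Here $\tau$ is real, so conjugation leaves it unchanged. This is the left-hand side of \eqref{IQC_3}, so \eqref{IQC_3} gives \eqref{IQC_1}.

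For the converter-side condition, I expand the quadratic form in \eqref{IQC_2} the same way:
\begin{equation*}
\begin{bmatrix}-\bm{Y}_c\\ \bm{I}\end{bmatrix}^{*}\bm{\Pi}\begin{bmatrix}-\bm{Y}_c\\ \bm{I}\end{bmatrix}
=\bm{Y}_c^*\bm{A}\bm{Y}_c-\bm{Y}_c^*\bm{B}^*-\bm{B}\bm{Y}_c-\bm{C}.
\end{equation*}
This matches the left-hand side of \eqref{IQC_4} term by term, so \eqref{IQC_4} gives \eqref{IQC_2}.

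Both hypotheses of Lemma~\ref{lem:QC_stability} then hold for every $\omega\in[0,\infty)$ and $\tau\in(0,1]$, and the lemma yields closed-loop stability. The only place an error could occur is in the signs and adjoints of the cross terms. In particular, the second expansion picks up a minus sign from $-\bm{Y}_c$ on the cross terms but not on $\bm{Y}_c^*\bm{A}\bm{Y}_c$. There is no real analytical obstacle beyond that bookkeeping.
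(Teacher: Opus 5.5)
Your proposal is correct and follows essentially the same route as the paper, which obtains the corollary by substituting the block-partitioned Hermitian multiplier $\bm{\Pi}=\bigl[\begin{smallmatrix}\bm{A} & \bm{B}^*\\ \bm{B} & -\bm{C}\end{smallmatrix}\bigr]$ into \eqref{IQC_1} and \eqref{IQC_2} of Lemma~\ref{lem:QC_stability} and expanding. Your two expansions, including the cross-term signs, the adjoints, and the check that $\bm{\Pi}$ is Hermitian, are correct.
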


The open-loop stability of $\bm{Y}_c(s)$ and $\bm{Z}_g(s)$ is generally guaranteed, as each converter is designed to remain stable when connected to an infinite bus and the power network is typically passive. According to Corollary~\ref{coro:QC_stability}, the closed-loop stability can be certified by finding a matrix triplet $\bm{A}(\omega,\tau)$, $\bm{B}(\omega,\tau)$, and $\bm{C}(\omega,\tau)$ satisfying~\eqref{IQC_3} and~\eqref{IQC_4}, as shown in Fig.~\ref{fig:interconnection}. The freedom in choosing this triplet enhances the flexibility of this stability condition and offers potential for scalable stability analysis.
The FQC-based condition also admits a geometric interpretation. For each fixed $(\omega,\tau)$, the same matrix triplet is used in the network-side and converter-side inequalities, yielding a separation between the network-side set and the converter-side set in geometric representations such as the DW shell, numerical range, $x$-$z$ graph, and SRG. These graphical forms are analyzed in the following subsections.

\subsection{Geometric Descriptions of Complex Matrices}
To support the analysis of existing geometric separation conditions in the next subsection, we first introduce several complex matrix analysis tools that are used to characterize the converter-side and network-side sets.

Given a matrix $\bm{M}\in\mathbb{C}^{n\times n}$, its Davis-Wielandt (DW) shell, numerical range, and $x$-$z$ graph are defined as~\cite{ref_x_z_graph,ref_DW_shell_definition}
\begin{equation}
\label{Equ_DW}
\begin{aligned}
&\mathcal{DW}(\bm{M})=\{(\bm{u}^*\bm{M}\bm{u},\|\bm{M}\bm{u}\|^2):\bm{u}\in\mathbb{C}^n,\ \|\bm{u}\|=1\},\\
&\mathcal{W}(\bm{M})=\{\bm{u}^*\bm{M}\bm{u}:\bm{u}\in\mathbb{C}^n,\|\bm{u}\|=1\},\\
&\mathcal{P}(\bm{M})=\{(\Re(\bm{u}^*\bm{M}\bm{u}),\|\bm{M}\bm{u}\|^2):\bm{u}\in\mathbb{C}^n,\|\bm{u}\|=1\},
\end{aligned}
\end{equation}
where $\bm{u}^*\bm{M}\bm{u}\in\mathbb{C}$ and $\|\bm{M}\bm{u}\|^2\in\mathbb{R}_+$.

The scaled relative graph (SRG) of $\bm{M}$ is defined as
\begin{equation}
\label{Equ_SRG}
\mathrm{SRG}(\bm{M})=\left\{
\frac{\lVert \bm{y}\rVert}{\lVert \bm{u}\rVert}
\exp\!\left(
\pm j\,\arccos\!\left(
\frac{\Re(\langle \bm{y},\bm{u}\rangle)}{\lVert \bm{y}\rVert\,\lVert \bm{u}\rVert}
\right)
\right)
\right\},
\end{equation}
where $\|\bm{u}\|=1$, and $\bm{y}=\bm{M}\bm{u}$. Moreover, the SRG of $\bm{M}$ can be calculated by $g(\mathcal{W}\!\bigl(f(\bm{M})\bigr))$~\cite{ref_SRG_definition}, where
\begin{equation}
\label{com_SRG}
\begin{aligned}
& f(\bm{M})=\bm{T}^{-1}\bm{K}\bm{T}^{-1},\bm{T} = \left(\bm{I}+\bm{M}^{*}\bm{M}\right)^{\frac12},\\
& \bm{K}=\bm{M}^{*}\bm{M}-j(\bm{M}^{*}+\bm{M})-\bm{I},\\
& z\in \mathcal{W}\!\bigl(f(\bm{M})\bigr),g(z)=\left\{
\frac{\Im(z)\pm j\sqrt{1-|z|^{2}}}{\Re(z)-1}
\right\}.
\end{aligned}
\end{equation}

The gains of the complex matrix $\bm{M}$ can be defined by its singular values $\sigma(\bm{M}):=[\sigma_1(\bm{M})\ \cdots\ \sigma_n(\bm{M})]$, where $\sigma_1(\bm{M})\geq\cdots\geq\sigma_n(\bm{M})$. The phase of the complex matrix $\bm{M}$ can be defined from its numerical range. A matrix is said to be sectorial if $0\notin\mathcal{W}(\bm{M})$. In this case, there exist a nonsingular matrix $\bm{T}$ and a diagonal unitary matrix $\bm{D}$ such that $\bm{M}=\bm{T}^*\bm{D}\bm{T}$~\cite{ref_phase_definition}. Then, the phases of the sectorial matrix $\bm{M}$ are defined as the phases of the $n$ diagonal entries of the matrix $\bm{D}$, ordered as \(\phi_1(\bm{M})\geq\cdots\geq\phi_n(\bm{M})\).

\subsection{Comparison with Existing Stability Conditions}\label{sec:Comparison with Existing Stability Conditions}
This subsection relates the stability conditions in~\cite{ref_phantom_DW_shell,ref_x_z_graph,ref_SRG_definition,ref_mixed_Gain_Phase} to the FQC framework. By setting $\bm{A}(\omega,\tau)=a(\omega,\tau)\bm{I}$, $\bm{B}(\omega,\tau)=b(\omega,\tau)\bm{I}$, $\bm{C}(\omega,\tau)=c(\omega,\tau)\bm{I}$ in Corollary~\ref{coro:QC_stability}, we obtain
\begin{subequations}\label{eq_IQC_all}
\begin{align}
\begin{aligned}
& a(\omega,\tau)\bm{I}+\tau b(\omega,\tau)\bm{Z}_g(j\omega)^*+\tau b(\omega,\tau)^*\bm{Z}_g(j\omega)\\
& -\tau^2c(\omega,\tau)\bm{Z}_g(j\omega)^*\bm{Z}_g(j\omega) \succeq 0,
\end{aligned}
\label{IQC_DW_Zg}\\
\begin{aligned}
& a(\omega,\tau)\bm{Y}_c(j\omega)^*\bm{Y}_c(j\omega)-b(\omega,\tau)\bm{Y}_c(j\omega)\\
& -b(\omega,\tau)^*\bm{Y}_c(j\omega)^*-c(\omega,\tau)\bm{I}\preceq -\epsilon(\omega,\tau)\bm{I}.
\end{aligned}
\label{IQC_DW_Yc}
\end{align}
\end{subequations}

For notational simplicity, we omit the dependence of the multipliers $a$, $b$, $c$, and auxiliary quantities on the fixed pair $(\omega,\tau)$ in the remainder of this paper unless otherwise stated. Different restrictions on the scalar multipliers \(a\), \(b\), and \(c\) recover existing stability conditions through their associated geometric separation tests.

A feasible triplet \(a,c\in\mathbb R\), \(b\in\mathbb C\) satisfying \eqref{IQC_DW_Zg} and \eqref{IQC_DW_Yc} defines a separating hyperplane between the converter and inverse network DW shells
\[H_d=\{(z,\nu)\in\mathbb{C}\times\mathbb{R}_+:2\Re(bz)+a\nu=x\},\]
where \(z\) and \(\nu\) denote the complex numerical-range and squared-gain coordinates of the DW shell representation in \eqref{Equ_DW}, respectively. The separating threshold \(x\) is chosen such that \(c-\epsilon<x<c\).

Setting \(a=0\) gives numerical range separation, while restricting \(b\) to be real gives the \(x\)-\(z\) graph and SRG separation conditions. The choices \(a>0,\ b=0\) and \(a=c=0\) yield the small-gain and small-phase conditions, respectively. Table~\ref{tab:comparison} summarizes these multiplier structures and their separating boundaries, with detailed derivations provided in Appendix~\ref{app:exist_stability_conditions}. The following proposition collects the corresponding sufficient conditions for closed-loop stability.

\begin{table*}[t]
\centering
\caption{Comparison of Stability Conditions Derived from the Scalar FQCs}
\label{tab:comparison}
%\begin{tabular}{p{0.22\linewidth}ccc}
\begin{tabular}{cccc}
\toprule
\multicolumn{4}{c}{
Scalar FQC-based stability condition
}\\
\midrule
\multicolumn{4}{c}{
$ a\bm{I}+\tau b\bm{Z}_g(j\omega)^*+\tau b^*\bm{Z}_g(j\omega)-\tau^2c\bm{Z}_g(j\omega)^*\bm{Z}_g(j\omega)\succeq 0, \qquad a\bm{Y}_c(j\omega)^*\bm{Y}_c(j\omega)-b\bm{Y}_c(j\omega)-b^*\bm{Y}_c(j\omega)^*-c\bm{I}\preceq-\epsilon\bm{I}.$
}\\
\midrule
Stability Condition & Scalar Multiplier & Separating Boundary & Conservatism \\
\midrule
DW shell separation & $a\in\mathbb{R}$, $b\in\mathbb{C}$, $c\in\mathbb{R}$ & $H_d=\{(z,\nu)\in\mathbb{C}\times\mathbb{R}_+:2\Re(bz)+a\nu=x,a\in\mathbb{R},b\in\mathbb{C}\}$ & Lowest \\
Numerical range separation & $a=0$, $b\in\mathbb{C}$, $c\in\mathbb{R}$ & $H_n=\{z\in\mathbb{C}:2\Re(bz)=x,b\in\mathbb{C}\}$ & Low \\
$x$-$z$ graph separation & $a\in\mathbb{R}$, $b\in\mathbb{R}$, $c\in\mathbb{R}$ & $H_p=\{(\Re(z),\nu)\in\mathbb{R}\times \mathbb{R}_+:2b\Re(z)+a\nu=x,a,b\in\mathbb{R}\}$ & Low \\
SRG separation & $a\in\mathbb{R}$, $b\in\mathbb{R}$, $c\in\mathbb{R}$ & $H_{r}=\{r\in\mathbb{C}:2b\Re(r)+a|r|^2=x,\ a,b\in\mathbb{R}\}$ & Low \\
Small gain theorem & $a>0$, $b=0$, $c\in\mathbb{R}$ & $H_\sigma=\{v\in\mathbb{R}_+:av=x,a\in\mathbb{R}\}$ & High \\
Small phase theorem & $a=c=0$, $b=b_1e^{jb_2}\in\mathbb{C}$ & $H_{\phi}^{\pm}=\{z^\phi\in\mathbb{C}:\Re(z^\phi)\cos b_2\pm \Im(z^\phi)\sin b_2=0,\ b_2\in\mathbb R\}$ & Medium \\
\bottomrule
\end{tabular}
\par\vspace{1mm}
\begin{minipage}{0.95\textwidth}
\footnotesize \emph{Note:} The SRG separation condition and the $x$-$z$ graph separation condition have the same level of conservatism.
\end{minipage}
\vspace{-3mm}
\end{table*}

\begin{Proposition}[ ]  \label{pro:exist_stability_conditions}
Suppose that the open-loop systems $\bm{Y}_c(s)$ and $\bm{Z}_g(s)$ are stable. Then, the negative feedback interconnection of $\bm{Y}_c(s)$ and $\bm{Z}_g(s)$ is stable if for all $\omega\in[0,\infty)$ and $\tau\in(0,1]$, either

i) Conditions~\eqref{IQC_3} and~\eqref{IQC_4} hold, or

ii) $\mathcal{DW}(-\bm{Y}_c(j\omega))\cap\mathcal{DW}^{-1}(\tau\bm{Z}_g(j\omega))=\emptyset$, or

iii) $\mathcal{W}(-\bm{Y}_c(j\omega))\cap\mathcal{W}^{-1}(\tau\bm{Z}_g(j\omega))=\emptyset$, or

iv) $\mathcal{P}(-\bm{Y}_c(j\omega))\cap\mathcal{P}^{-1}(\tau\bm{Z}_g(j\omega))=\emptyset$, or

v) $\mathrm{SRG}(-\bm{Y}_c(j\omega))\cap\mathrm{SRG}^{-1}(\tau\bm{Z}_g(j\omega))=\emptyset$, or

vi) $\sigma_1(\bm{Y}_c(j\omega))\sigma_1(\bm{Z}_g(j\omega))<1$, or 

vii) $\bm{Y}_c(j\omega)$ and $\bm{Z}_g(j\omega)$ are sectorial, $\phi_1(\bm{Y}_c(j\omega))+\phi_1(\bm{Z}_g(j\omega))<\pi$ and $\phi_{2n}(\bm{Y}_c(j\omega))+\phi_{2n}(\bm{Z}_g(j\omega))>-\pi$.
\end{Proposition}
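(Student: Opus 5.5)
The plan is to reduce every item to item i), i.e.\ to Corollary~\ref{coro:QC_stability}. Item i) is exactly Corollary~\ref{coro:QC_stability}. For each of items ii)--vii), I will show that the stated geometric or gain/phase condition at a fixed $(\omega,\tau)$ yields scalars $a,c\in\mathbb{R}$, $b\in\mathbb{C}$ and $\epsilon>0$ satisfying \eqref{IQC_DW_Zg}--\eqref{IQC_DW_Yc}. Setting $\bm{A}=a\bm{I}$, $\bm{B}=b\bm{I}$, $\bm{C}=c\bm{I}$ then satisfies \eqref{IQC_3}--\eqref{IQC_4}, and Corollary~\ref{coro:QC_stability} gives closed-loop stability.

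The first step is to translate the two scalar FQCs into statements about DW shells.
\begin{itemize}
\item Converter side: for a unit vector $\bm{u}$, put $z=\bm{u}^*(-\bm{Y}_c)\bm{u}$ and $\nu=\|\bm{Y}_c\bm{u}\|^2$. Then $\bm{u}^*(\cdot)\bm{u}$ of the left side of \eqref{IQC_DW_Yc} equals $a\nu+2\Re(bz)-c$. So \eqref{IQC_DW_Yc} says that the compact set $\mathcal{DW}(-\bm{Y}_c)$ lies in the open half-space $a\nu+2\Re(bz)\le c-\epsilon$.
\item Network side: for $\bm{w}\neq 0$, set $\bm{y}=\tau\bm{Z}_g\bm{w}$ and normalize so that $\|\bm{y}\|=1$. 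The quadratic form in \eqref{IQC_DW_Zg} then becomes $a\|\bm{w}\|^2+2\Re(b\,\bm{y}^*\bm{w})-c\ge 0$. This is the defining inequality of the closed half-space $a\nu+2\Re(bz)\ge c$ evaluated on the points of the inverse DW shell $\mathcal{DW}^{-1}(\tau\bm{Z}_g)$, i.e.\ the DW shell of the relation $\bm{y}\mapsto\bm{w}$.
\end{itemize}
Hence feasibility of \eqref{eq_IQC_all} is equivalent to strict separation of the two sets by the hyperplane $H_d$ of Table~\ref{tab:comparison}, with threshold $x\in(c-\epsilon,c)$.

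Item ii) then follows from a separating-hyperplane argument, which needs convexity. DW shells of $m\times m$ matrices with $m\ge 3$ are convex, and here $m=2n\ge 3$ whenever $n\ge 2$; the single-converter case $n=1$ would need the convex hull, which only enlarges the sets and still leaves a valid sufficient condition if I phrase ii) via hulls. The converter shell is compact. The inverse network shell is closed and convex because it is the image of a convex shell under the inverse map, which acts on the lifted coordinates as a projective (linear-fractional) transformation of the Hermitian form. Disjointness of a compact convex set and a closed convex set gives a strictly separating hyperplane, whose coefficients provide $(a,b,c,\epsilon)$.

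Items iii)--v) are handled in the same way with restricted multipliers. For iii), take $a=0$; the numerical range is always convex (Toeplitz--Hausdorff), which avoids the convexity issue. For iv), take $b$ real; the $x$-$z$ graph is the projection of the DW shell onto $(\Re z,\nu)$. For v), use the map $g\circ\mathcal{W}\circ f$ in \eqref{com_SRG} to identify $\mathrm{SRG}$ separation with $x$-$z$ graph separation, so that the circles $2b\Re(r)+a|r|^2=x$ correspond to lines in the $(\Re z,\nu)$ plane.

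For vi), choose $b=0$, $a=1$ and $c=\sigma_1(\bm{Y}_c)^2+\delta$ with $\delta>0$ small enough that $\tau^2\sigma_1(\bm{Z}_g)^2 c\le 1$; such a $\delta$ exists by the strict inequality, and since $\tau\le1$ the same choice works for every $\tau$. For vii), choose $a=c=0$ and $b=e^{jb_2}$ with the rotation $b_2$ taken from the midpoint of the phase sectors. The sectorial decomposition $\bm{M}=\bm{T}^*\bm{D}\bm{T}$ shows that $\Re(e^{jb_2}\,\cdot)$ is definite on each side. Strictness on the converter side comes from the phase inequalities being strict, and invariance under $\tau$ holds because positive scaling leaves phases unchanged.

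The main obstacle is the rigorous treatment of the inverse shell in ii): its definition when $\bm{Z}_g$ is singular or when points go to infinity, its convexity, and obtaining a strict margin $\epsilon$ on the converter side rather than on the unbounded network side. I would first work in the homogeneous (lifted Hermitian-form) coordinates $(\bm{w},\bm{y})$, where both conditions are cones, and apply a strict S-lemma/cone-separation argument there before projecting back.
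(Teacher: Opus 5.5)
Your route is the paper's: you turn each of ii)--vii) into a scalar triplet $(a,b,c,\epsilon)$ satisfying \eqref{IQC_DW_Zg}--\eqref{IQC_DW_Yc} and invoke Corollary~\ref{coro:QC_stability}. The paper cites \cite{ref_phantom_DW_shell} for the existence of the separating boundaries, whereas you try to prove it by convexity. Most of this is sound.

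Your network-side computation already shows that $\mathcal{DW}^{-1}(\tau\bm{Z}_g)=\mathcal{DW}(\tau^{-1}\bm{Y}_g)$. This is a compact DW shell, because $\bm{Z}_g=\bm{Y}_g^{-1}$ is invertible by construction. So the projective-map detour is unnecessary, and the ``main obstacle'' you describe (singular $\bm{Z}_g$, points at infinity, an S-lemma argument) does not arise.

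For iii)--v), convexity holds for every $n$. The numerical range $\mathcal{W}$ is convex by Toeplitz--Hausdorff, and so is $\mathcal{P}(\bm{M})=\mathcal{W}(\Re(\bm{M})+j\bm{M}^*\bm{M})$. Hence iv) and v) do not need DW-shell convexity.

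Your multipliers for vi) work. In vii), ``midpoint of the phase sectors'' must mean a point of
$[\phi_1(\bm{Z}_g)-\tfrac{\pi}{2},\phi_{2n}(\bm{Z}_g)+\tfrac{\pi}{2}]\cap(-\tfrac{\pi}{2}-\phi_{2n}(\bm{Y}_c),\tfrac{\pi}{2}-\phi_1(\bm{Y}_c))$.
This interval is nonempty precisely because of the two phase inequalities. The midpoint of either individual sector can fail to lie in it.

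The genuine gap is item ii) for $n=1$. There $\bm{Y}_c$ and $\bm{Z}_g$ are $2\times 2$, and their DW shells are generally ellipsoid surfaces, which are not convex. Replacing disjointness of the shells by disjointness of their hulls, as you suggest, strengthens the hypothesis. You would then be proving a weaker statement than ii).

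The simplest repair bypasses separation altogether. If $\det(\bm{I}+\tau\bm{Y}_c\bm{Z}_g)=0$, the construction in the proof of Lemma~\ref{lem:QC_stability} gives a unit $\bm{z}$ with $-\bm{Y}_c\bm{z}=\tau^{-1}\bm{Y}_g\bm{z}$. Then $(\bm{z}^*(-\bm{Y}_c)\bm{z},\|\bm{Y}_c\bm{z}\|^2)$ is a common point of $\mathcal{DW}(-\bm{Y}_c)$ and $\mathcal{DW}(\tau^{-1}\bm{Y}_g)$, and its projections are common points for the sets in iii)--v). So disjointness alone gives nonsingularity at each $(\omega,\tau)$. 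The homotopy/Nyquist half of that proof then finishes the argument, with no convexity needed.

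If you prefer to stay on the FQC route, you must show that disjoint $2\times 2$ shells have disjoint hulls. The key fact is that $\mathrm{conv}\,\mathcal{DW}(\bm{M})$ meets the paraboloid $\nu=|z|^2$ only at eigenvalue points $(\lambda,|\lambda|^2)$, and these lie on the shell itself. Every shell contains such a point, so neither shell can lie inside the hollow of the other. Two disjoint (possibly flat) ellipsoids, neither enclosing the other, bound disjoint solids. The paper's citation-based proof does not address this $n=1$ subtlety either.
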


\begin{Remark}[ ]  \label{remark:SRG}
The \(x\)-\(z\) graph and SRG represent the same information in different coordinates. Specifically, for each point $(\Re(z),\nu)$ in the $x$-$z$ graph with \(z=\bm u^*\bm M\bm u\) and \(\nu=\|\bm M\bm u\|^2\), the corresponding SRG point $r$ satisfies $\Re(r)=\Re(z)$, $|r|^2=\nu$ and $r=\Re(z)\pm j\sqrt{\nu-\Re(z)^2}$ according to \eqref{Equ_SRG}. Conversely, each SRG point gives back $\Re(z)=\Re(r)$ and $\nu=|r|^2$. Therefore, the SRG separating boundary is a circle centered on the real axis, as shown in Table~\ref{tab:comparison} and Fig.~\ref{fig:existing_conditions}\subref{fig:existing_conditions_srg}. The conjugate SRG points represent the same \(x\)-\(z\) graph point. Despite their different boundary shapes, the two representations yield equivalent separation conditions with the same conservatism. Both correspond to \(b\in\mathbb R\), which discards \(\Im(z)\) and reduces the flexibility available in DW shell separation.
\end{Remark}

\begin{Remark}[ ]  \label{remark:gain}
As summarized in Table~\ref{tab:comparison}, the small gain theorem compares converter and network gain bounds without using the numerical-range coordinate. It is therefore a specialization of \(x\)-\(z\) graph separation that uses only gain information, as shown in Fig.~\ref{fig:existing_conditions}\subref{fig:existing_conditions_xz}.
\end{Remark}

\begin{Remark}[ ]  \label{remark:phase}
Table~\ref{tab:comparison} shows that the small phase theorem restricts the half-plane boundaries to lines passing through the origin. Numerical-range separation allows an affine separating line with variable orientation and offset, and therefore provides more flexibility, even when the matrices are not sectorial, as shown in Fig.~\ref{fig:existing_conditions}\subref{fig:existing_conditions_nr}. 
\end{Remark}

Based on the above analysis, the FQC framework provides a unified perspective to interpret existing stability conditions through different restrictions on the scalar multipliers $a$, $b$, and $c$. Among the conditions in Proposition~\ref{pro:exist_stability_conditions}, the DW-shell separation condition adopts the most flexible scalar-multiplier structure ($a\in\mathbb{R}$, $b\in\mathbb{C}$, $c\in\mathbb{R}$). By imposing additional restrictions on these multipliers, it naturally degenerates into more conservative conditions, as summarized in Table~\ref{tab:comparison}. Therefore, these conditions form a hierarchy in which increasing structural simplicity is achieved at the cost of reduced flexibility and increased conservatism.

\begin{figure*}[t]
    \centering
    \subfloat[]{%
        \includegraphics[width=0.235\textwidth]{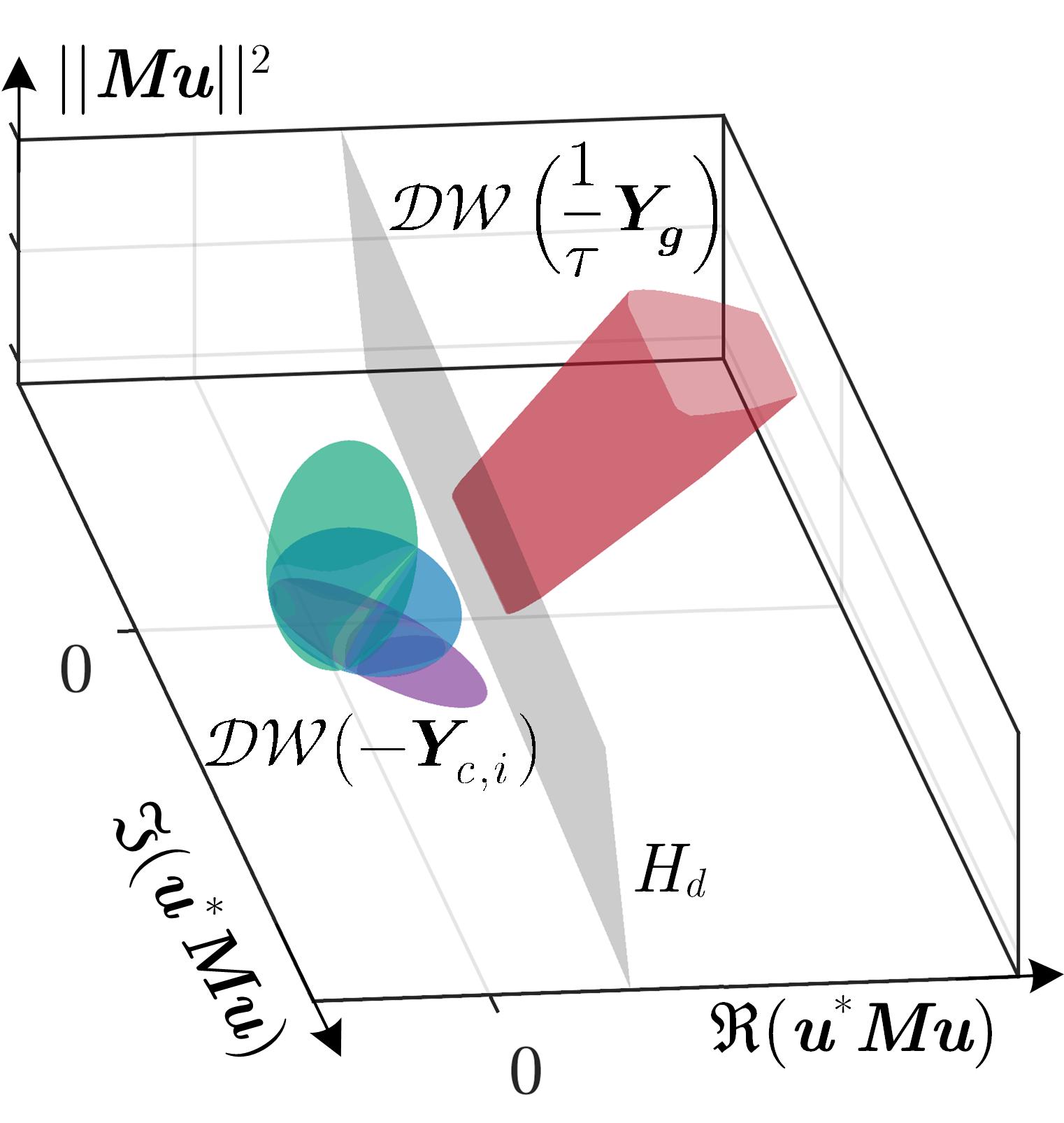}
        \label{fig:existing_conditions_dw}}
    \hfill
    \subfloat[]{%
        \includegraphics[width=0.235\textwidth]{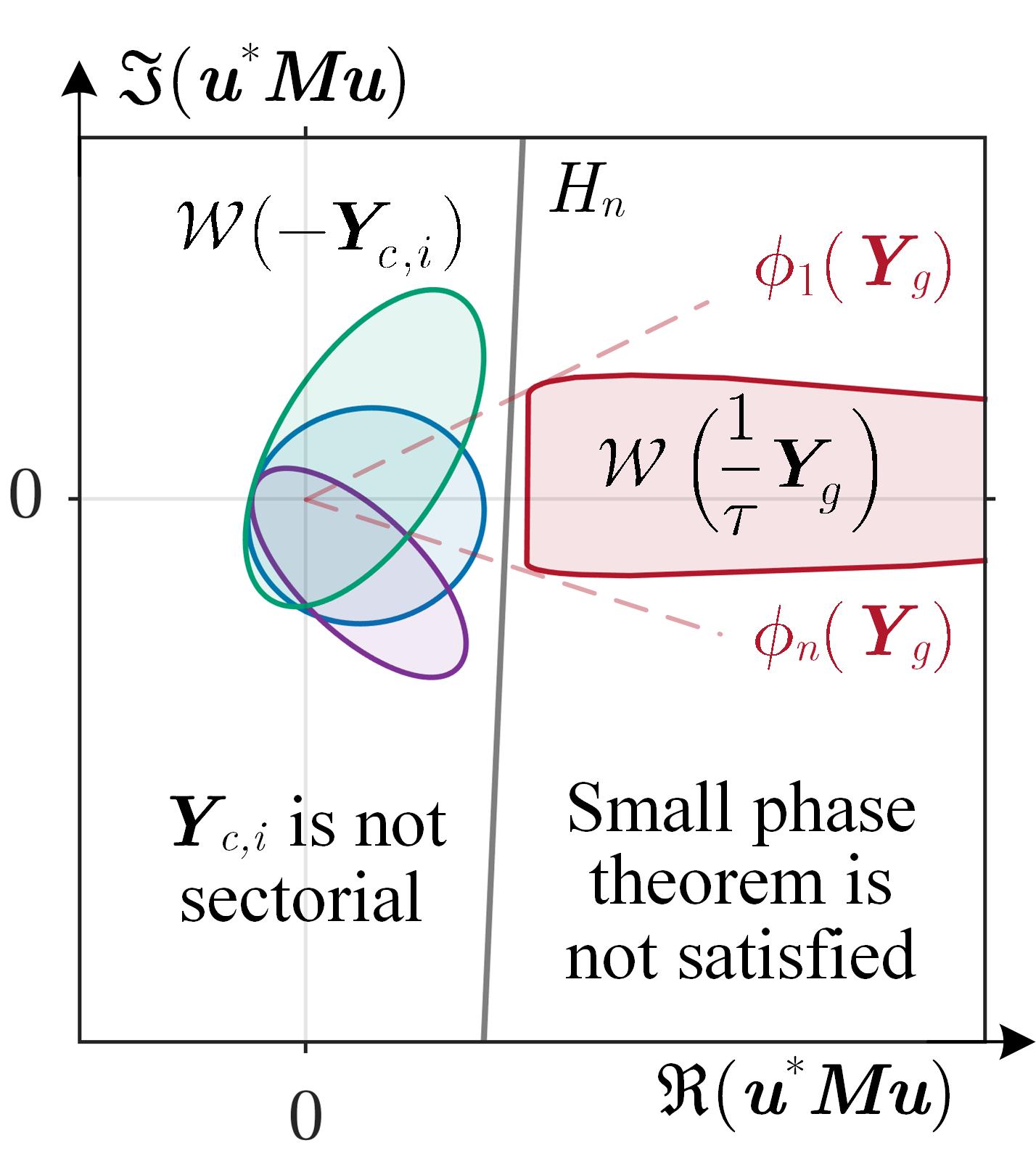}
        \label{fig:existing_conditions_nr}}
    \hfill
    \subfloat[]{%
        \includegraphics[width=0.235\textwidth]{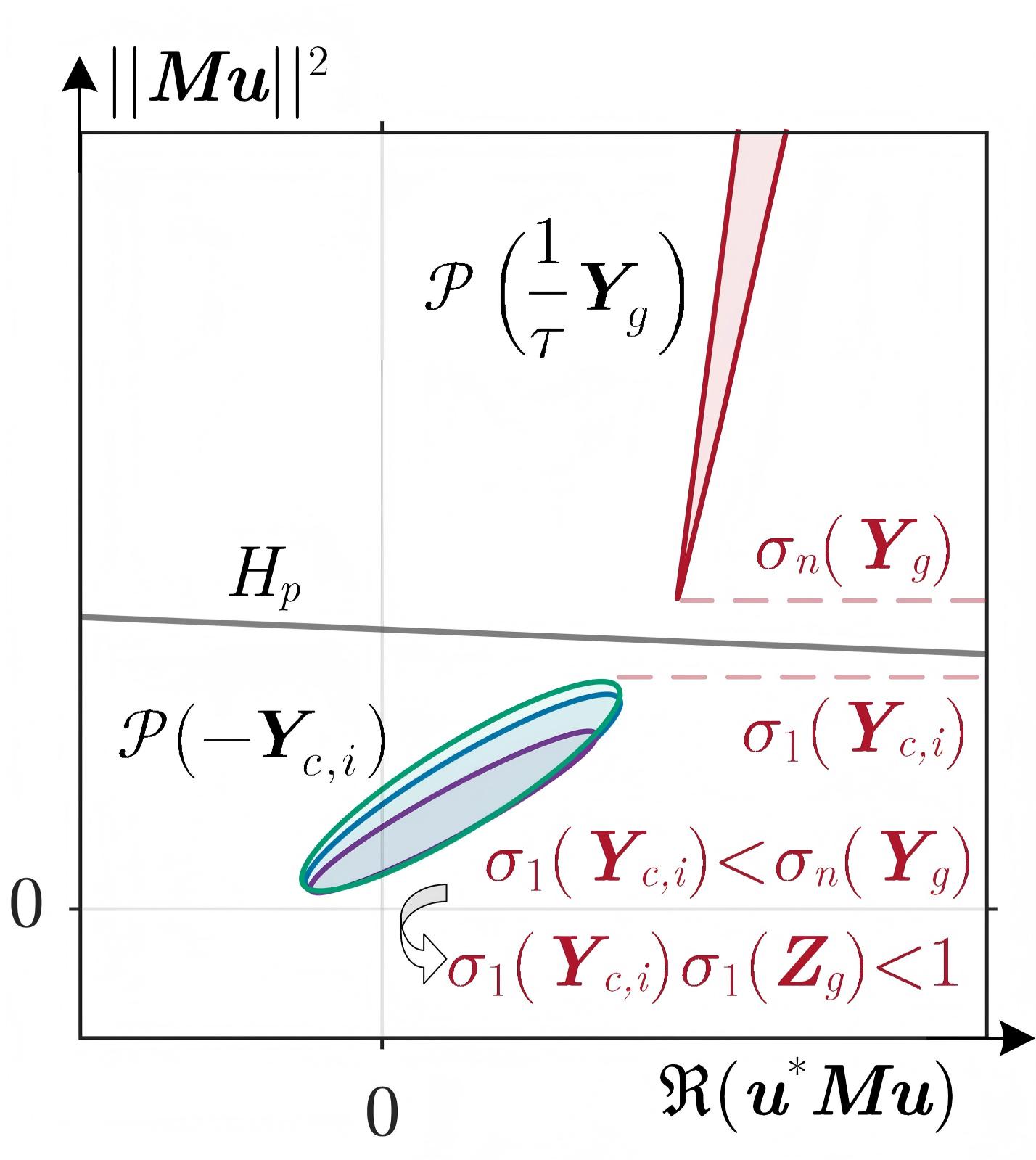}
        \label{fig:existing_conditions_xz}}
    \hfill
    \subfloat[]{%
        \includegraphics[width=0.235\textwidth]{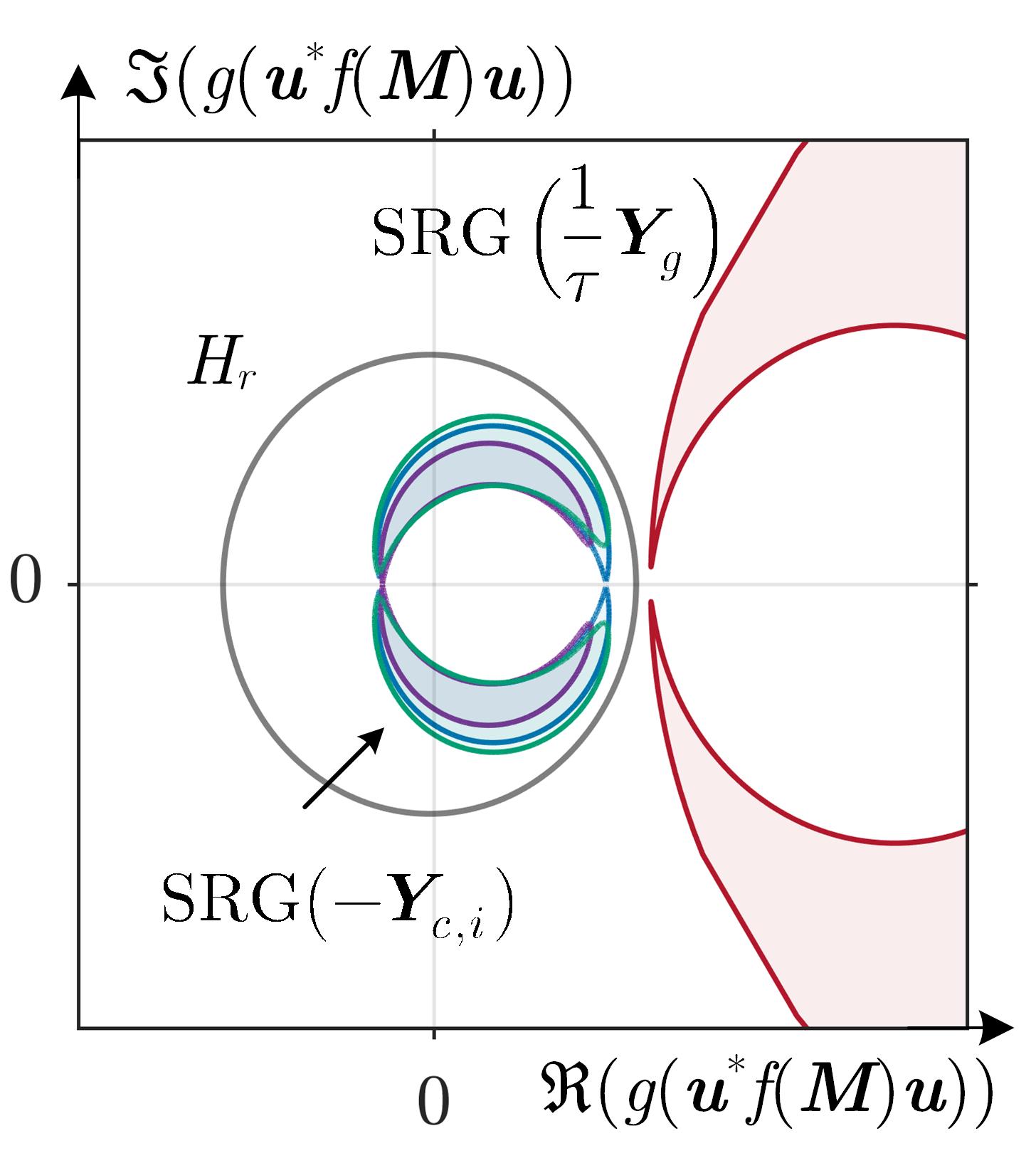}
        \label{fig:existing_conditions_srg}}
    \vspace{-1mm}
    \caption{Geometric interpretations of existing stability conditions for the converter-side and network-side sets at a fixed pair $(\omega,\tau)$. 
    (a) DW shell representation. 
    (b) Numerical range representation. 
    (c) $x$--$z$ graph representation. 
    (d) SRG representation.}
    \label{fig:existing_conditions}
    \vspace{-3mm}
\end{figure*}

\subsection{Discussion on the Decentralization of Stability Conditions}\label{sec:Decentralization_discussion}

The FQC formulation provides a direct route to decentralized geometric stability conditions. Since the converter admittance matrix $\bm{Y}_c(s)$ in~\eqref{Equ_converter_model} is block diagonal, the inequality~\eqref{IQC_DW_Yc} decomposes into local inequalities for each converter $\bm{Y}_{c,i}(s)$
\begin{equation}\label{IQC_DW_Yc_decentralized}
\begin{aligned}
& a\bm{Y}_{c,i}(j\omega)^*\bm{Y}_{c,i}(j\omega)-b\bm{Y}_{c,i}(j\omega)\\
& -b^*\bm{Y}_{c,i}(j\omega)^*-c\bm{I}\preceq -\epsilon\bm{I}.
\end{aligned}
\end{equation}

Taking the DW shell condition as an example, for each fixed pair \((\omega,\tau)\), these local inequalities use the same scalar multipliers \(a\in\mathbb{R}\), \(b\in\mathbb{C}\), and \(c\in\mathbb{R}\) for all converters. Together with the network-side inequality~\eqref{IQC_DW_Zg}, they recover the original centralized FQC-based condition. Thus, each converter can be certified locally under a multiplier shared by all converters.

Geometrically, each converter contributes its own DW shell, and a common hyperplane \(H_d\) must separate the inverse network shell from all converter shells. This enables decentralized verification and ensures \(\mathcal{DW}(-\bm{Y}_{c,i}(j\omega))\cap\mathcal{DW}^{-1}(\tau\bm{Z}_g(j\omega))=\emptyset\) for all \(i\), as shown in Fig.~\ref{fig:existing_conditions}\subref{fig:existing_conditions_dw}. However, the fact that each individual converter DW shell is disjoint from the network DW shell does not guarantee such a common hyperplane. Indeed, since $\mathcal{DW}(-\bm{Y}_c(j\omega))$ is the convex hull of the union of $\mathcal{DW}(-\bm{Y}_{c,i}(j\omega))$ for all $i=1,\ldots,n$, this aggregate convex hull may intersect the network DW shell even when every individual shell is disjoint from it. Therefore, a decentralized DW shell stability condition requires not merely the disjointness between each individual converter shell and the network shell, but the existence of a common hyperplane that separates these converter shells from the network shell.

As shown in Fig.~\ref{fig:existing_conditions}\subref{fig:existing_conditions_nr}--\subref{fig:existing_conditions_srg}, the above discussion can be extended in a similar manner to the numerical range, $x$-$z$ graph, and SRG separation conditions in Proposition~\ref{pro:exist_stability_conditions}, since they can also be regarded as specializations of the FQC framework. For these geometric conditions, the common multiplier triplet determines the separating boundary and must satisfy the network-side FQC and all local converter-side FQCs. Updating a converter may invalidate its local inequality \textit{under the existing multiplier}, so the corresponding boundary may no longer separate its set. In that case, a new feasible triplet \((a,b,c)\), and corresponding separating boundary, must be sought to restore the decentralized certificate.

A simpler decentralized implementation is available for the \(x\)-\(z\) graph condition when the network contains only transmission lines with the same \(R/X\) ratio. After incorporating the common network dynamics into the converter models~\cite{ref_Gain_Phase_2}, the network admittance can be represented by a positive definite matrix \(\bm{Y}_g=\bm{Y}_g^*\succ0\). The separating line \(H_p\) can be determined analytically by the extreme eigenvalues of \(\frac{1}{\tau}\bm{Y}_g\), independently of the converter dynamics, so replacing or updating a converter requires no new common hyperplane search. Within the FQC framework of~\eqref{eq_IQC_all}, this property corresponds to choosing \(a\), \(b\), \(c\) based solely on the network parameters and \(\tau\). With these multipliers fixed, only the local FQC inequality of the updated converter needs to be re-evaluated. The following proposition formalizes this network-parameterized certificate.

%Thus, if the network remains unchanged, only the local QC inequality of a replaced or updated converter needs to be re-evaluated. The scalar multipliers \(a\), \(b\), and \(c\) need not be recomputed, and neither the network-side inequality nor the local inequalities of the unchanged converters need to be re-evaluated. The following proposition gives the resulting multipliers and local stability condition explicitly.

\begin{Proposition}[Network-parameterized decentralized \(x\)--\(z\) graph stability condition]
\label{pro:scalable_xz_envelope}
Suppose that the open-loop systems $\bm{Y}_c(s)$ and $\bm{Z}_g(s)$ are stable and $\bm{Y}_g=\bm{Y}_g^*\succ0$. Let $\lambda_-:=\lambda_{\min}(\bm{Y}_g)$, $\lambda_+:=\lambda_{\max}(\bm{Y}_g)$, $S:=\lambda_-+\lambda_+$, and $P:=\lambda_-\lambda_+$. Then, the negative feedback interconnection of $\bm{Y}_c(s)$ and $\bm{Z}_g(s)$ is stable if, for all $\omega\in[0,\infty)$, $\tau\in(0,1]$, and $i=1,\ldots,n$, there exists $\epsilon>0$ such that
\begin{equation}
\label{Equ_local_xz_envelope}
\bm{Y}_{c,i}(j\omega)^*\bm{Y}_{c,i}(j\omega)
+\frac{S}{2\tau}\left(\bm{Y}_{c,i}(j\omega)+\bm{Y}_{c,i}(j\omega)^*\right)+\frac{P}{\tau^2}\bm{I}\succeq\epsilon\bm{I}.
\end{equation}
\end{Proposition}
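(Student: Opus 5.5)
The plan is to apply Corollary~\ref{coro:QC_stability} in its scalar form~\eqref{eq_IQC_all} with a multiplier triplet that depends only on $\tau$ and the network parameters $S,P$. Then the network-side inequality~\eqref{IQC_DW_Zg} holds automatically, and the converter-side inequality~\eqref{IQC_DW_Yc} is exactly the block-diagonal assembly of the local conditions~\eqref{Equ_local_xz_envelope}. Since~\eqref{Equ_local_xz_envelope} is quadratic in $\bm{Y}_{c,i}$ with a real coefficient on the Hermitian part, I choose a real $b$, which is the $x$-$z$ graph structure in Table~\ref{tab:comparison}:
\[
a=-1,\qquad b=\frac{S}{2\tau},\qquad c=\frac{P}{\tau^2}.
\]

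\emph{Converter side.} With this choice, the left-hand side of~\eqref{IQC_DW_Yc_decentralized} for converter $i$ is
\[
-\bm{Y}_{c,i}^*\bm{Y}_{c,i}-\frac{S}{2\tau}\bigl(\bm{Y}_{c,i}+\bm{Y}_{c,i}^*\bigr)-\frac{P}{\tau^2}\bm{I}.
\]
This is the negative of the left-hand side of~\eqref{Equ_local_xz_envelope}, so it is $\preceq-\epsilon_i\bm{I}$. Because $\bm{Y}_c$ is block diagonal, $\bm{Y}_c^*\bm{Y}_c$, $\bm{Y}_c+\bm{Y}_c^*$ and $\bm{I}$ are all block diagonal with the corresponding local blocks. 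Hence~\eqref{IQC_DW_Yc} holds with $\epsilon(\omega,\tau):=\min_i\epsilon_i>0$, taking the minimum over the finitely many converters.

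\emph{Network side.} Here $\bm{Z}_g=\bm{Y}_g^{-1}$, which is Hermitian positive definite. Multiplying~\eqref{IQC_DW_Zg} through, the condition becomes
\[
-\bm{I}+\frac{S}{2}\bigl(\bm{Z}_g+\bm{Z}_g^*\bigr)-P\,\bm{Z}_g^*\bm{Z}_g \succeq 0 .
\]
Since $\bm{Z}_g$ is Hermitian, this matrix is the polynomial $-1+S\mu-P\mu^2$ evaluated at $\bm{Z}_g$. Diagonalize $\bm{Z}_g$ unitarily, with eigenvalues $\mu=1/\lambda$ where $\lambda\in[\lambda_-,\lambda_+]$ ranges over the eigenvalues of $\bm{Y}_g$. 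Using the factorization
\[
1-S\mu+P\mu^2=(1-\lambda_-\mu)(1-\lambda_+\mu),
\]
each eigenvalue of the matrix equals
\[
-\Bigl(1-\frac{\lambda_-}{\lambda}\Bigr)\Bigl(1-\frac{\lambda_+}{\lambda}\Bigr)=\frac{(\lambda-\lambda_-)(\lambda_+-\lambda)}{\lambda^2}\ge 0 .
\]
So~\eqref{IQC_DW_Zg} holds for every $\tau\in(0,1]$ and $\omega$.

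\emph{Conclusion.} Both inequalities of Corollary~\ref{coro:QC_stability} hold with $\bm{A}=a\bm{I}$, $\bm{B}=b\bm{I}$, $\bm{C}=c\bm{I}$ for all $(\omega,\tau)$, and both open-loop systems are stable by assumption. Corollary~\ref{coro:QC_stability} therefore gives closed-loop stability.

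The main obstacle is not algebraic but one of setup. The argument needs the network model, after absorbing the common $R/X$ dynamics into the converters, to be a frequency-independent Hermitian matrix $\bm{Y}_g\succ0$. Only then is $\bm{Z}_g(j\omega)$ normal, so that the spectral factorization argument above applies. I would state this reduction explicitly, citing~\cite{ref_Gain_Phase_2}. Once it is in place, the network-side check reduces to the scalar quadratic in $\lambda$, which is nonnegative between its roots $\lambda_\pm$. The bracketing of the spectrum of $\bm{Y}_g$ by $[\lambda_-,\lambda_+]$ is precisely what motivates the choice of $S$ and $P$.
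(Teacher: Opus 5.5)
Your proof is correct and follows the paper's argument: the same multipliers $a=-1$, $b=S/(2\tau)$, $c=P/\tau^2$, the converter-side inequality identified with \eqref{Equ_local_xz_envelope} and assembled through the block-diagonal structure, and the network side checked spectrally. The only difference is on the network side, where you diagonalize $\bm{Z}_g$ directly while the paper applies a congruence with $\bm{Y}_g$ to obtain $-(\bm{Y}_g-\lambda_-\bm{I})(\bm{Y}_g-\lambda_+\bm{I})\succeq0$; this is the same computation up to a factor $\lambda^2$, and your explicit choice $\epsilon=\min_i\epsilon_i$ fills in a detail the paper leaves implicit.
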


\begin{proof}
The proof is given in Appendix~\ref{app:proof_2}.
\end{proof}

As illustrated in Fig.~\ref{fig:XZ_decentralized}, for each \(\tau\in(0,1]\), the $x$--$z$ graph of the Hermitian network matrix $\bm Y_g/\tau$ is bounded above by $L_\tau(x)=Sx/\tau-P/\tau^2$. The choice $a=-1$, $b=S/(2\tau)$, and $c=P/\tau^2$ in Appendix~\ref{app:proof_2} gives this network-side boundary, while~\eqref{Equ_local_xz_envelope} places each converter graph strictly above it. Since $S$ and $P$ depend only on the network, updating a converter requires rechecking only its local FQC inequality.

\begin{figure}
\centering
\includegraphics[width=0.9\columnwidth]{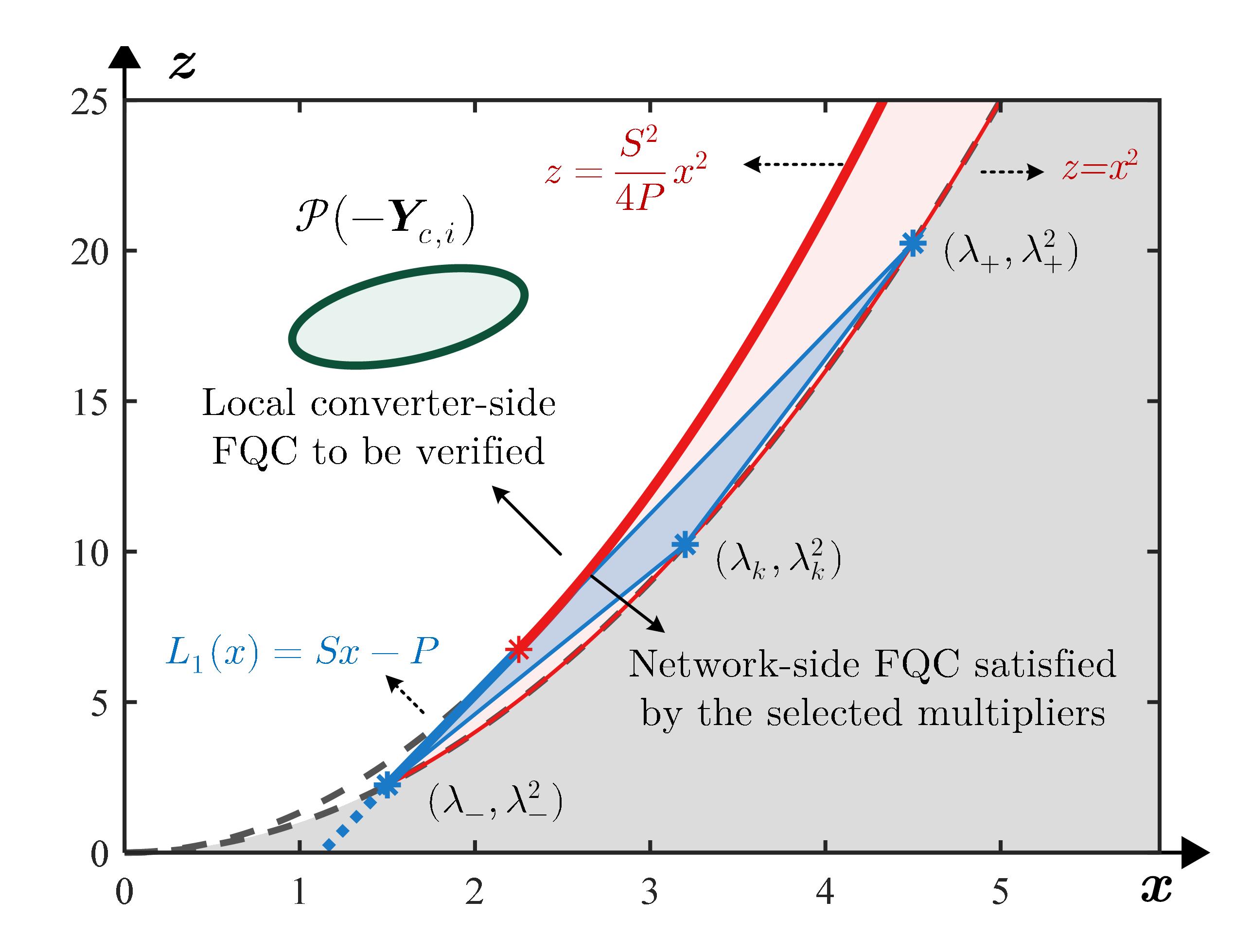}
\vspace{-6mm}
\caption{Geometric interpretation of the network-parameterized decentralized \(x\)--\(z\) graph condition. The blue polygon represents \(\mathcal P(\bm Y_g)\) at \(\tau=1\), and the red-shaded region is swept by \(\mathcal P(\bm Y_g/\tau)\) over \(\tau\in(0,1]\). The green ellipse represents the local converter graph \(\mathcal P(-\bm Y_{c,i})\). The gray region \(z<x^2\) is unattainable by any \(x\)--\(z\) graph. For each \(\tau\), \(H_p\) coincides with the network upper boundary \(L_\tau\), so the selected multipliers satisfy the network-side FQC, and only the local converter-side condition requires verification.}
\label{fig:XZ_decentralized}
\end{figure}

In contrast to the geometric separation conditions above, the gain and phase conditions in Proposition~\ref{pro:exist_stability_conditions} can be decentralized without constructing a common separating hyperplane. Their decentralization can instead be assessed using device-level gain scalars and phase areas, leading to the following decentralized mixed gain-phase condition~\cite{ref_Gain_Phase_2}. 

\begin{Proposition}[Decentralized mixed gain-phase condition]
\label{pro:decentralized_gain_phase}
Suppose that the open-loop systems $\bm{Y}_c(s)$ and $\bm{Z}_g(s)$ are stable. Then, the negative feedback interconnection of $\bm{Y}_c(s)$ and $\bm{Z}_g(s)$ is stable if, for all $\omega\in[0,\infty)$, either

i) the decentralized gain condition, i.e.,
\begin{equation}
\max_{i}\sigma_1(\bm{Y}_{c,i}(j\omega))<\sigma_{2n}(\bm{Y}_g(j\omega)),
\label{Equ_decentralized_gain}
\end{equation}
holds, or

ii) the decentralized phase condition, i.e.,
\begin{equation}
\left\{
\begin{aligned}
a)\;& \bm{Y}_{c,i}(j\omega)\ \mathrm{and}\ \bm{Z}_g(j\omega)\ \mathrm{are\ sectorial},\\
b)\;& \max_{i}\phi_1(\bm{Y}_{c,i}(j\omega))
-\min_{i}\phi_2(\bm{Y}_{c,i}(j\omega))<\pi,\\
c)\;& \max_{i}\phi_1(\bm{Y}_{c,i}(j\omega))<\pi-\phi_1(\bm{Z}_g(j\omega)),\\
d)\;& \min_{i}\phi_2(\bm{Y}_{c,i}(j\omega))>-\pi-\phi_{2n}(\bm{Z}_g(j\omega)),
\end{aligned}
\right.
\label{Equ_decentralized_phase}
\end{equation}
holds.
\end{Proposition}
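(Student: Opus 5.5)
The plan is to reduce both parts to Proposition~\ref{pro:exist_stability_conditions}, namely items vi) and vii), by showing that the decentralized device-level quantities bound the corresponding quantities of the block-diagonal matrix $\bm{Y}_c(j\omega)$. Fix $\omega$; since the statement only asks that one of i) or ii) hold at each frequency, we treat the two cases separately.

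\emph{Gain case.} Because $\bm{Y}_c(j\omega)$ is block diagonal, its singular values are the union of those of the blocks, so $\sigma_1(\bm{Y}_c(j\omega))=\max_i\sigma_1(\bm{Y}_{c,i}(j\omega))$. Also $\sigma_1(\bm{Z}_g(j\omega))=1/\sigma_{2n}(\bm{Y}_g(j\omega))$ since $\bm{Z}_g=\bm{Y}_g^{-1}$. Then \eqref{Equ_decentralized_gain} gives $\sigma_1(\bm{Y}_c)\sigma_1(\bm{Z}_g)<1$, which is item vi). For $\tau\in(0,1]$ it also gives $\sigma_1(\bm{Y}_c)\sigma_1(\tau\bm{Z}_g)<1$, matching the homotopy used in the underlying FQC result.

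\emph{Phase case.} The key fact is that the phases of a block-diagonal matrix of sectorial blocks can be controlled by the block phases. Each $\bm{Y}_{c,i}$ is $2\times2$ and sectorial, so it has phases $\phi_1(\bm{Y}_{c,i})\ge\phi_2(\bm{Y}_{c,i})$ with $\phi_1-\phi_2<\pi$. The numerical range of the block-diagonal matrix is the convex hull of the block numerical ranges. Each $\mathcal{W}(\bm{Y}_{c,i})$ lies in the closed angular sector $\{re^{j\theta}:r>0,\ \theta\in[\phi_2(\bm{Y}_{c,i}),\phi_1(\bm{Y}_{c,i})]\}$; this uses the known characterization that the phase interval of a sectorial matrix is the angular field of its numerical range. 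Hence all of them lie in the sector $[\min_i\phi_2,\max_i\phi_1]$. By b), this sector has opening less than $\pi$, so it is convex and excludes $0$. The convex hull therefore stays in it, and $\bm{Y}_c$ is sectorial with $\phi_1(\bm{Y}_c)\le\max_i\phi_1(\bm{Y}_{c,i})$ and $\phi_{2n}(\bm{Y}_c)\ge\min_i\phi_2(\bm{Y}_{c,i})$, again using that the extreme phases equal the extreme angles of the numerical range. Conditions c) and d) then give $\phi_1(\bm{Y}_c)+\phi_1(\bm{Z}_g)<\pi$ and $\phi_{2n}(\bm{Y}_c)+\phi_{2n}(\bm{Z}_g)>-\pi$. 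Sectoriality of $\bm{Z}_g$ is assumed in a), so item vii) holds. Scaling by $\tau>0$ leaves phases unchanged, so the condition is uniform in $\tau$.

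Finally, for each $\omega$, one of items vi) or vii) of Proposition~\ref{pro:exist_stability_conditions} holds. That proposition is stated with the phrase ``either \ldots or'' frequency-wise, so closed-loop stability follows. The main obstacle is the phase step: justifying that the extreme phases of a sectorial matrix coincide with the extreme arguments of its numerical range, and that the strict inequality in b) makes the aggregate sector proper, so that the angular bound survives convexification. I would cite the standard matrix-phase result from~\cite{ref_phase_definition} for this. If the mixing of gain and phase across frequencies needs care at the boundary frequencies, I would instead invoke directly the mixed gain-phase theorem of~\cite{ref_mixed_Gain_Phase}, which explicitly allows the switching between gain and phase conditions over frequency.
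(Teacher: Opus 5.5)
Your proposal is correct and follows essentially the same route as the paper's proof in Appendix~\ref{app:proof_5}. In the gain case you use $\sigma_1(\bm{Y}_c)=\max_i\sigma_1(\bm{Y}_{c,i})$ and $\sigma_1(\bm{Z}_g)=\sigma_{2n}^{-1}(\bm{Y}_g)$. In the phase case you use b) to place all block numerical ranges in a common sector of width less than $\pi$, so that $\bm{Y}_c$ is sectorial with extreme phases controlled by the block extremes. Both cases then reduce, frequency by frequency, to items vi) and vii) of Proposition~\ref{pro:exist_stability_conditions}. Your convex-hull argument spells out what the paper only asserts: the paper claims the equalities $\phi_1(\bm{Y}_c)=\max_i\phi_1(\bm{Y}_{c,i})$ and $\phi_{2n}(\bm{Y}_c)=\min_i\phi_2(\bm{Y}_{c,i})$, whereas your one-sided inequalities already suffice. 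Finally, because the FQC multipliers are allowed to depend on $\omega$, switching between the gain and phase conditions across frequencies needs no extra care, so you do not need the fallback to the external mixed gain-phase theorem.
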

\begin{proof}
The self-contained proof is given in Appendix~\ref{app:proof_5}~\cite{ref_Gain_Phase_2}.
\end{proof}

At each frequency, the condition in Proposition~\ref{pro:decentralized_gain_phase} can be verified either by comparing every converter gain \(\sigma_1(\bm{Y}_{c,i}(j\omega))\) with the network gain \(\sigma_{2n}(\bm{Y}_g(j\omega))\), or by checking that every converter phase area $[\phi_2(\bm{Y}_{c,i}(j\omega)),\phi_1(\bm{Y}_{c,i}(j\omega))]$ is contained within the network phase area $[-\pi-\phi_{2n}(\bm{Z}_{g}(j\omega)),\pi-\phi_1(\bm{Z}_{g}(j\omega))]$, and the union of all converters' phase areas has a width less than $\pi$. Note that if a matrix is not sectorial, we consider its phase area to be $(-\infty,+\infty)$. Both tests use converter-level gain or phase information and require no common separating hyperplane.

\section{Optimization-Based Scalable Stability Assessment via FQCs}\label{sec:screening-diagnosis}

This section first develops a full-multiplier scalable stability condition within the FQC framework, which provides additional freedom beyond scalar-multiplier geometric conditions and is therefore even less conservative than the DW shell stability condition. We formulate the frequency-wise optimization to identify potentially unstable frequency ranges and problematic devices. Finally, we combine this optimization with the mixed gain-phase condition to form a hierarchical screening-diagnosis procedure for efficient stability assessment.

\subsection{Full-Multiplier Scalable Stability Condition}

Section~\ref{sec:existing_decentralized_stability_conditions} shows that existing stability conditions can be interpreted as special cases of the FQC framework obtained by imposing scalar structures on the multipliers. Although these scalar multipliers provide transparent geometric interpretations, their restricted form may introduce additional conservatism. To reduce conservatism, this scalar structure can be relaxed by introducing the full multiplier, where the scalar parameters $a$, $b$, and $c$ are replaced by $2\times2$ matrices $\bm{A}_0$, $\bm{B}_0$, and $\bm{C}_0$. The corresponding multiplier matrices in \eqref{IQC_3} and \eqref{IQC_4} are developed as
\begin{equation*}
%\label{Equ_full_multiplier}
\bm{A}=\bm{I}_n\otimes \bm{A}_0,\quad
\bm{B}=\bm{I}_n\otimes \bm{B}_0,\quad
\bm{C}=\bm{I}_n\otimes \bm{C}_0,
\end{equation*}
where $\bm{A}_0=\bm{A}_0^*\in\mathbb{C}^{2\times 2}$, $\bm{C}_0=\bm{C}_0^*\in\mathbb{C}^{2\times 2}$, and $\bm{B}_0\in\mathbb{C}^{2\times 2}$. 

Here and hereafter, the frequency argument $j\omega$ is omitted when no ambiguity arises. The network-side condition is expressed as
\begin{equation}
\label{Equ_Hg_tau}
\begin{aligned}
\bm{H}_g(\omega,\tau)&=\bm{Y}_g^*(\bm{I}_n\otimes\bm{A}_0)\bm{Y}_g
+\tau(\bm{I}_n\otimes\bm{B}_0)\bm{Y}_g\\
&\quad+\tau\bm{Y}_g^*(\bm{I}_n\otimes\bm{B}_0^*)-\tau^2(\bm{I}_n\otimes\bm{C}_0)\succeq0 .
\end{aligned}
\end{equation}

The corresponding converter-side condition is expressed as
\begin{equation}\label{Equ_diag_Wi}
\text{diag}(\bm{W}_1,\ldots,\bm{W}_n)\succeq\epsilon\bm{I},
\end{equation}
where $\bm{W}_i=-\bm{Y}_{c,i}^*\bm{A}_0\bm{Y}_{c,i}
+\bm{B}_0\bm{Y}_{c,i}+\bm{Y}_{c,i}^*\bm{B}_0^*+\bm{C}_0$.

These conditions yield the following stability result.

\begin{Proposition}[Full-multiplier scalable stability condition]
\label{pro:shared_full_multiplier}
Suppose that the open-loop systems $\bm{Y}_c(s)$ and $\bm{Z}_g(s)$ are stable. Then, the negative feedback interconnection of $\bm{Y}_c(s)$ and $\bm{Z}_g(s)$ is stable if for all $\omega\in[0,\infty)$ and $\tau\in(0,1]$, there exist $\bm{A}_0=\bm{A}_0^*$, $\bm{C}_0=\bm{C}_0^*$, $\bm{B}_0$ and $\epsilon>0$ such that $\bm{H}_g(\omega,\tau)\succeq0$ and $\bm{W}_i\succeq\epsilon\bm{I}$ for all $i=1,\ldots,n$.
\end{Proposition}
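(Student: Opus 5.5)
The plan is to show that Proposition~\ref{pro:shared_full_multiplier} is Corollary~\ref{coro:QC_stability} with the structured choice $\bm{A}=\bm{I}_n\otimes\bm{A}_0$, $\bm{B}=\bm{I}_n\otimes\bm{B}_0$, $\bm{C}=\bm{I}_n\otimes\bm{C}_0$. For each fixed $(\omega,\tau)$ we take the $\bm{A}_0,\bm{B}_0,\bm{C}_0,\epsilon$ given by the hypothesis and define $\bm{A},\bm{B},\bm{C}$ this way. Since $\bm{A}_0$ and $\bm{C}_0$ are Hermitian, $\bm{A}$ and $\bm{C}$ are Hermitian, so the triplet is admissible in Corollary~\ref{coro:QC_stability}. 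It then remains to verify that \eqref{IQC_3} and \eqref{IQC_4} hold with this triplet. Closed-loop stability then follows directly from Corollary~\ref{coro:QC_stability}, which in turn rests on Lemma~\ref{lem:QC_stability}.

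\emph{Network side.} We show that $\bm{H}_g(\omega,\tau)\succeq0$ in \eqref{Equ_Hg_tau} is equivalent to \eqref{IQC_3} via a congruence. Since $\bm{Z}_g=\bm{Y}_g^{-1}$ is well defined and stable, $\bm{Z}_g(j\omega)$ is finite and $\bm{Y}_g(j\omega)$ is nonsingular for every $\omega\ge0$. Denote the left-hand side of \eqref{IQC_3} by $\bm{Q}$, and multiply it on the left by $\bm{Y}_g^*$ and on the right by $\bm{Y}_g$. Using $\bm{Z}_g\bm{Y}_g=\bm{I}$ and $\bm{Y}_g^*\bm{Z}_g^*=\bm{I}$, the four terms transform as follows:
\begin{itemize}
\item $\bm{A}$ becomes $\bm{Y}_g^*\bm{A}\bm{Y}_g$;
\item $\tau\bm{Z}_g^*\bm{B}$ becomes $\tau\bm{B}\bm{Y}_g$;
\item $\tau\bm{B}^*\bm{Z}_g$ becomes $\tau\bm{Y}_g^*\bm{B}^*$;
\item $-\tau^2\bm{Z}_g^*\bm{C}\bm{Z}_g$ becomes $-\tau^2\bm{C}$.
\end{itemize}
Hence $\bm{Y}_g^*\bm{Q}\bm{Y}_g=\bm{H}_g(\omega,\tau)$. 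Because congruence by a nonsingular matrix preserves positive semidefiniteness, $\bm{H}_g\succeq0$ implies $\bm{Q}\succeq0$, which is \eqref{IQC_3}.

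\emph{Converter side.} We use the block-diagonal structure \eqref{Equ_converter_model}. Since $\bm{Y}_c$ and $\bm{I}_n\otimes\bm{A}_0$ are both block diagonal with $2\times2$ blocks, their products are block diagonal too. Therefore the left-hand side of \eqref{IQC_4} equals
\[
\text{diag}\bigl(\bm{Y}_{c,i}^*\bm{A}_0\bm{Y}_{c,i}-\bm{B}_0\bm{Y}_{c,i}-\bm{Y}_{c,i}^*\bm{B}_0^*-\bm{C}_0\bigr)_{i=1}^n=-\text{diag}(\bm{W}_1,\ldots,\bm{W}_n).
\]
A block-diagonal Hermitian matrix satisfies $\succeq\epsilon\bm{I}$ if and only if each diagonal block does. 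The hypothesis $\bm{W}_i\succeq\epsilon\bm{I}$ for all $i$ is therefore exactly \eqref{Equ_diag_Wi}, which is \eqref{IQC_4} with the same $\epsilon$ shared across converters.

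The only step needing care is the network-side congruence, specifically the invertibility of $\bm{Y}_g(j\omega)$ on the whole imaginary axis. I would justify it from the standing assumptions that $\bm{Z}_g=\bm{Y}_g^{-1}$ exists and is stable, so it has no poles on $j\mathbb{R}$. The remaining steps are routine bookkeeping with Kronecker and block-diagonal products.
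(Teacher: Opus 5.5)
Your proposal is correct and follows the same route as the paper's proof. You instantiate Corollary~\ref{coro:QC_stability} with the Kronecker-structured multipliers, obtain \eqref{IQC_3} from $\bm{H}_g(\omega,\tau)\succeq0$ by congruence with $\bm{Y}_g=\bm{Z}_g^{-1}$, and use the block-diagonal structure of $\bm{Y}_c$ to reduce \eqref{IQC_4} to $\bm{W}_i\succeq\epsilon\bm{I}$ for all $i$; the only difference is that you spell out the term-by-term congruence, which the paper leaves implicit (and for the direction you need, $\bm{Q}=\bm{Z}_g^*\bm{H}_g\bm{Z}_g$ already suffices, using only $\bm{Y}_g\bm{Z}_g=\bm{I}$).
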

\begin{proof}
The proof is given in Appendix~\ref{app:proof_3}.
\end{proof} 

In Proposition~\ref{pro:shared_full_multiplier}, the parameter $\tau$ needs to be scanned over the entire interval $(0,1]$. This scanning process can be computationally burdensome, especially when the frequency grid is dense. To mitigate this issue, the following proposition provides a simplified endpoint stability condition that only requires checking the network-side inequality at $\tau=1$.

\begin{Proposition}[Endpoint full-multiplier scalable stability condition]
\label{pro:endpoint_full_multiplier}
Suppose that the open-loop systems $\bm{Y}_c(s)$ and $\bm{Z}_g(s)$ are stable. Then, the negative feedback interconnection of $\bm{Y}_c(s)$ and $\bm{Z}_g(s)$ is stable if for all $\omega\in[0,\infty)$, there exist $\bm{A}_0=\bm{A}_0^*\succeq0$, $\bm{C}_0=\bm{C}_0^*\succeq0$, $\bm{B}_0$ and $\epsilon>0$ such that $\bm{H}_g(\omega,1)\succeq0$ and $\bm{W}_i\succeq\epsilon\bm{I}$ for all $i=1,\ldots,n$.
\end{Proposition}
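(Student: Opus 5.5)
The plan is to reduce the statement to Proposition~\ref{pro:shared_full_multiplier}. Fix $\omega$ and take the triplet $\bm{A}_0\succeq0$, $\bm{C}_0\succeq0$, $\bm{B}_0$ and $\epsilon>0$ supplied by the hypothesis. Then use the \emph{same} triplet for every $\tau\in(0,1]$.

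The converter-side matrices $\bm{W}_i=-\bm{Y}_{c,i}^*\bm{A}_0\bm{Y}_{c,i}+\bm{B}_0\bm{Y}_{c,i}+\bm{Y}_{c,i}^*\bm{B}_0^*+\bm{C}_0$ do not depend on $\tau$. Hence $\bm{W}_i\succeq\epsilon\bm{I}$ holds for all $\tau$ once it holds at all. The only thing left to prove is that $\bm{H}_g(\omega,1)\succeq0$ implies $\bm{H}_g(\omega,\tau)\succeq0$ for every $\tau\in(0,1]$.

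For this I view \eqref{Equ_Hg_tau} as a matrix-valued quadratic in $\tau$:
\begin{equation*}
f(\tau):=\bm{H}_g(\omega,\tau)=\bm{P}+\tau\bm{K}-\tau^2\bm{Q},
\end{equation*}
where the three Hermitian pieces are
\begin{equation*}
\bm{P}:=\bm{Y}_g^*(\bm{I}_n\otimes\bm{A}_0)\bm{Y}_g,\quad
\bm{K}:=(\bm{I}_n\otimes\bm{B}_0)\bm{Y}_g+\bm{Y}_g^*(\bm{I}_n\otimes\bm{B}_0^*),\quad
\bm{Q}:=\bm{I}_n\otimes\bm{C}_0 .
\end{equation*}
The two sign assumptions give the endpoint values. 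Since $\bm{A}_0\succeq0$, we have $\bm{I}_n\otimes\bm{A}_0\succeq0$, so $f(0)=\bm{P}\succeq0$. The hypothesis gives $f(1)=\bm{P}+\bm{K}-\bm{Q}\succeq0$.

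Next I verify by direct expansion the interpolation identity
\begin{equation*}
f(\tau)=(1-\tau)f(0)+\tau f(1)+\tau(1-\tau)\bm{Q}.
\end{equation*}
The right-hand side equals $\bm{P}+\tau\bm{K}-\tau\bm{Q}+\tau\bm{Q}-\tau^2\bm{Q}$, which is exactly $f(\tau)$. For $\tau\in(0,1]$ the weights $1-\tau$, $\tau$ and $\tau(1-\tau)$ are all nonnegative. Since $\bm{C}_0\succeq0$, also $\bm{Q}\succeq0$. So $f(\tau)$ is a nonnegative combination of positive semidefinite matrices, and $\bm{H}_g(\omega,\tau)\succeq0$ on the whole interval. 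This is the matrix-concavity of $f$ in $\tau$ that the assumption $\bm{C}_0\succeq0$ provides.

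With $\bm{H}_g(\omega,\tau)\succeq0$ and $\bm{W}_i\succeq\epsilon\bm{I}$ established for all $\omega\ge0$, $\tau\in(0,1]$ and all $i$, Proposition~\ref{pro:shared_full_multiplier} applies directly and yields closed-loop stability.

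The only real obstacle is spotting that the $\tau$-dependence is concave, so that the two endpoints control the whole interval. The interpolation identity above resolves this, and it is exactly where both sign assumptions are used: $\bm{A}_0\succeq0$ for the left endpoint $\tau=0$, and $\bm{C}_0\succeq0$ for the curvature term.
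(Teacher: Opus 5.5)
Your proposal is correct and follows essentially the same route as the paper. Your interpolation identity $f(\tau)=(1-\tau)f(0)+\tau f(1)+\tau(1-\tau)\bm{Q}$ is exactly the paper's decomposition $\bm{H}_g(\omega,\tau)=\tau\bm{H}_g(\omega,1)+(1-\tau)\bm{Y}_g^*\bm{A}\bm{Y}_g+(\tau-\tau^2)\bm{C}$, and you use $\bm{A}_0\succeq0$ and $\bm{C}_0\succeq0$ in the same places; the only cosmetic difference is that you conclude via Proposition~\ref{pro:shared_full_multiplier} rather than invoking Corollary~\ref{coro:QC_stability} directly, which is equivalent.
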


\begin{proof}
The proof is given in Appendix~\ref{app:proof_4}.
\end{proof}

Proposition~\ref{pro:shared_full_multiplier} and~\ref{pro:endpoint_full_multiplier} enable scalable stability analysis in multi-converter systems. The existence of the common full multiplier $(\bm{A}_0,\bm{B}_0,\bm{C}_0)$ that simultaneously certifies the network-side inequality and local converter-side inequalities ensures the interconnected system stability. Compared with the scalar multipliers in~\eqref{IQC_DW_Zg} and~\eqref{IQC_DW_Yc_decentralized}, this full-multiplier structure offers additional degrees of freedom for reducing conservatism, while retaining separate converter-side inequalities.

For a fixed common full multiplier \((\bm{A}_0,\bm{B}_0,\bm{C}_0)\), the local converter-side inequalities in Proposition~\ref{pro:shared_full_multiplier} and Proposition~\ref{pro:endpoint_full_multiplier} define the FQC stability margin \(\lambda_{\min}(\bm{W}_i)\) for the \(i\)-th converter. These margins support converter-level comparisons and the optimization problem developed below for diagnosis.

\subsection{Optimization-Based Screening and Diagnosis}

Based on Proposition~\ref{pro:endpoint_full_multiplier}, we formulate a sequential frequency-wise optimization procedure to determine whether a full multiplier $(\bm{A}_0,\bm{B}_0,\bm{C}_0)$ and a positive margin $\epsilon$ can be found to satisfy the stability conditions $\bm{H}_g(\omega,1)\succeq0$ and $\bm{W}_i\succeq\epsilon\bm{I}$. The first stage (Stage-I) determines whether these FQC-based conditions admit a positive common margin, thereby certifying system stability. If a positive common margin cannot be obtained at some frequency points, the second stage (Stage-II) evaluates the individual converter margins to identify weak-converter candidates associated with potential instability risks. Both stages are convex optimization problems with linear matrix inequality (LMI) constraints, which can be efficiently solved using semidefinite programming (SDP) techniques. They provide a numerical FQC certification margin and converter-level diagnostic indices directly from converter and network matrices, without constructing or visually inspecting graphical sets. The formulations are given below.

Since the FQC inequalities in \eqref{Equ_Hg_tau} and \eqref{Equ_diag_Wi} are homogeneous in the multiplier, any feasible triplet $(\bm{A}_0,\bm{B}_0,\bm{C}_0)$ can be multiplied by a positive scalar without changing feasibility, while the associated margins are scaled by the same factor. A normalization is therefore needed to prevent an ill-posed margin maximization and to place the optimized margins on a fixed scale. For a fixed $\omega$, the normalization constraint can be expressed as
\begin{equation}
\label{Equ_normalization}
\mathrm{tr}(\bm{A}_0)+\mathrm{tr}(\bm{C}_0)=1,\quad
\|\bm{P}_0\|_F\le 1,
\end{equation}
where $\bm{P}_0=\bigl[\begin{smallmatrix}\bm{A}_0 & \bm{B}_0^*\\ \bm{B}_0 & \bm{C}_0\end{smallmatrix}\bigr]$; $\mathrm{tr}(\cdot)$ denotes the trace of a matrix; $\|\cdot\|_F$ denotes the Frobenius norm of a matrix. 

%The first constraint in \eqref{Equ_normalization} fixes the scale of the multiplier by normalizing the sum of the traces of $\bm{A}_0$ and $\bm{C}_0$ to be 1, while the second constraint limits the Frobenius norm of $\bm{P}_0$ to be no larger than a prescribed constant $R>0$. This second constraint is introduced to prevent numerical issues that may arise from excessively large multipliers, which can lead to ill-conditioned optimization problems. By bounding the norm of $\bm{P}_0$, we ensure that the optimization remains well-posed and numerically stable, while still allowing for sufficient flexibility in the choice of the multiplier. The specific value of $R$ can be chosen based on numerical experimentation or domain knowledge, and it serves as a regularization parameter that balances between conservatism and numerical tractability in the optimization process.

%with a prescribed constant $R>0$. 

For each fixed $\omega$, the Stage-I optimization maximizes the common FQC margin $\epsilon$ and is formulated as
\begin{subequations}
\label{Opt_stage1}
\begin{align}
\epsilon^{\star}(\omega)=
&\max_{\bm{A}_0,\bm{B}_0,\bm{C}_0,\epsilon}\ \epsilon \label{Opt_stage1_a}\\
\mathrm{s.t.}\quad
&\bm{W}_i(\omega;\bm{A}_0,\bm{B}_0,\bm{C}_0)\succeq \epsilon \bm{I}_2,\quad i=1,\ldots,n, \label{Opt_stage1_b}\\
&\bm{H}_g(\omega,1;\bm{A}_0,\bm{B}_0,\bm{C}_0)\succeq0, \label{Opt_stage1_c}\\
&\bm{A}_0=\bm{A}_0^*\succeq0,\quad \bm{C}_0=\bm{C}_0^*\succeq0, \label{Opt_stage1_d}\\
&\mathrm{tr}(\bm{A}_0)+\mathrm{tr}(\bm{C}_0)=1,\quad
\|\bm{P}_0\|_F\le 1. \label{Opt_stage1_e}
\end{align}
\end{subequations}

A frequency point \(\omega\) is classified as certified if \(\epsilon^\star(\omega)>0\), in which case all converters admit a strictly positive local margin under the same multiplier. Otherwise, it is classified as uncertified. If certified points can be obtained for every $\omega\in[0,\infty)$, then the scalable stability condition in Proposition~\ref{pro:endpoint_full_multiplier} is satisfied, and the closed-loop interconnected system is stable.
It should be noted that, if the uncertified points are obtained at some frequencies, this only indicates that the proposed scalable sufficient condition is not satisfied. It does not by itself imply that the closed-loop system is unstable, but it indicates a potential stability risk in the corresponding frequency range.

Stage-I maximizes the worst local FQC margin rather than the distribution of margins across converters. To obtain device-level information, Stage-II is applied to frequency points that remain uncertified after Stage-I. It assigns each converter a local FQC margin \(\epsilon_i\) and maximizes their sum as follows
\begin{subequations}\label{Opt_stage2}
\begin{align}
&\max_{\bm{A}_0,\bm{B}_0,\bm{C}_0,\epsilon_1,\ldots,\epsilon_n}\ \sum_{i=1}^{n}\epsilon_i \label{Opt_stage2_a}\\
\mathrm{s.t.}\quad
&\bm{W}_i(\omega;\bm{A}_0,\bm{B}_0,\bm{C}_0)\succeq \epsilon_i\bm{I}_2,\quad i=1,\ldots,n, \label{Opt_stage2_b}\\
&\bm{H}_g(\omega,1;\bm{A}_0,\bm{B}_0,\bm{C}_0)\succeq0, \label{Opt_stage2_c}\\
&\bm{A}_0=\bm{A}_0^*\succeq0,\quad \bm{C}_0=\bm{C}_0^*\succeq0, \label{Opt_stage2_d}\\
&\mathrm{tr}(\bm{A}_0)+\mathrm{tr}(\bm{C}_0)=1,\quad
\|\bm{P}_0\|_F\le 1. \label{Opt_stage2_e}
\end{align}
\end{subequations}

%Since the stability condition must hold along the whole path $\tau\in[0,1]$, the device-level diagnostic index at a fixed frequency is defined by the worst local margin over the checked path parameters. 

After solving \eqref{Opt_stage2}, the diagnostic index of the $i$-th converter is computed as
\begin{equation}
\label{Equ_gamma_i}
\gamma_i(\omega)=\lambda_{\min}\!\left(\bm{W}_i(\omega;\bm{A}_0^{\star}(\omega),\bm{B}_0^{\star}(\omega),\bm{C}_0^{\star}(\omega))\right),
\end{equation}
where $(\bm{A}_0^{\star},\bm{B}_0^{\star},\bm{C}_0^{\star})$ denotes the optimal full-multiplier triplet.

%Stage-II should be interpreted as a diagnostic refinement rather than an additional stability theorem. It provides converter-level indices for comparing all local margins when Stage-I fails to produce a stability certificate. A smaller $\gamma_i(\omega)$ indicates a weaker local margin under the optimized multiplier $(\bm{A}_0^{\star},\bm{B}_0^{\star},\bm{C}_0^{\star})$. In particular, if $\gamma_i(\omega)\leq0$, then the local converter-side QC inequality is not satisfied for converter $i$. The converter $\bm{Y}_{c,i}(j\omega)$ is therefore identified as a weak converter associated with a potential small-signal stability risk in the corresponding frequency range.

The index $\gamma_i(\omega)$ quantifies the local FQC margin of converter $i$ under the selected optimal multiplier $(\bm{A}_0^{\star},\bm{B}_0^{\star},\bm{C}_0^{\star})$. A converter with $\gamma_i(\omega)<0$ is identified as a weak device in the corresponding frequency range. Although the optimized sum may be positive, it does not constitute a stability certificate because some local FQC inequalities may remain unsatisfied.

\subsection{Scalable Stability Assessment Procedure}

The above optimization provides scalable stability certification and device-level diagnosis, but solving an SDP at every frequency is computationally demanding. We therefore use the mixed gain-phase condition in Proposition~\ref{pro:decentralized_gain_phase} for preliminary screening. Compared with the general geometric separation conditions in Proposition~\ref{pro:exist_stability_conditions}, it enables decentralized verification through direct comparisons of converter-level scalar gains or phase areas with network-side bounds, without searching for a common separating hyperplane or comparing two- or three-dimensional graphical sets at each $(\omega,\tau)$. Although more conservative, this test is inexpensive to evaluate, and its gain and phase plots provide a direct visual indication of which frequency ranges are certified.

\begin{figure}
\centering
\includegraphics[width=\columnwidth]{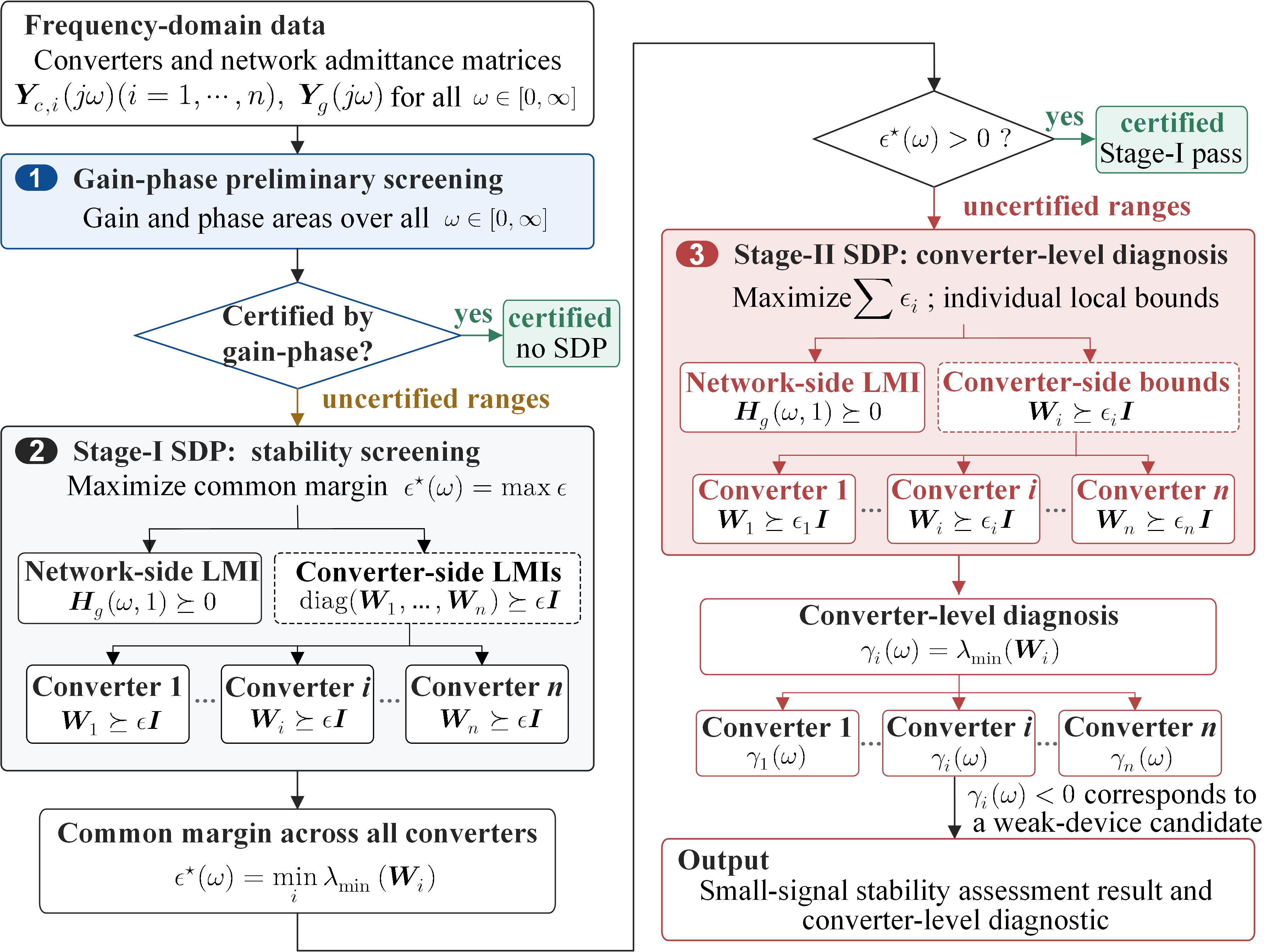}
\vspace{-6mm}
\caption{Hierarchical screening-diagnosis procedure for scalable small-signal stability assessment of multi-converter systems.}
\label{fig:flow_chart}
\end{figure}

Hence, the proposed scalable stability assessment method is a hierarchical screening-diagnosis procedure, as shown in Fig.~\ref{fig:flow_chart}. The decentralized mixed gain-phase test is first applied over the full frequency range. Stage-I is then solved only for ranges left uncertified by this test. When Stage-I still has uncertified frequency ranges, Stage-II is solved to compute the device-level indices $\gamma_i(\omega)$ and identify weak-device candidates. In this way, the high-cost optimization is concentrated on the frequency ranges where the simpler gain-phase test fails, while retaining the ability to reduce conservatism and provide converter-level diagnostic information.

\section{Case Studies}\label{sec:case}

This section validates the proposed scalable stability assessment method on a modified 68-bus system shown in Fig.~\ref{fig:IEEE_68}. The system contains four GFM converters connected to Buses 1--4, nine GFL converters connected to Buses 5--13, and two SGs connected to Buses 14 and 15. Bus 16 is modeled as an infinite bus, which represents a remote external grid and provides the voltage-angle reference. The transmission lines are represented by Pi models, including the series resistance, inductance, and shunt capacitances of each branch. Details of the line parameters, load configuration, and converter and generator models can be found in the 68-bus case study in~\cite{ref_Gain_Phase_2}. The PLL bandwidths of GFL converters 5--13 are set to 10, 15, 20, 25, 30, 35, 40, 60, 70 rad/s, respectively, ensuring distinct dynamic characteristics for all GFL converters. Following~\cite{ref_Gain_Phase_2}, the GFM converters and SGs are treated as grid-supporting devices and incorporated into the network side. The GFM converters and SGs are assumed to be stable when they are interconnected, which is usually guaranteed by the design of GFM converters and SGs for parallel operation and is not the focus here. The assessment therefore focuses on the interaction between the nine GFL converters and the remaining system.

\begin{figure}[t]
\centering
\includegraphics[width=\columnwidth]{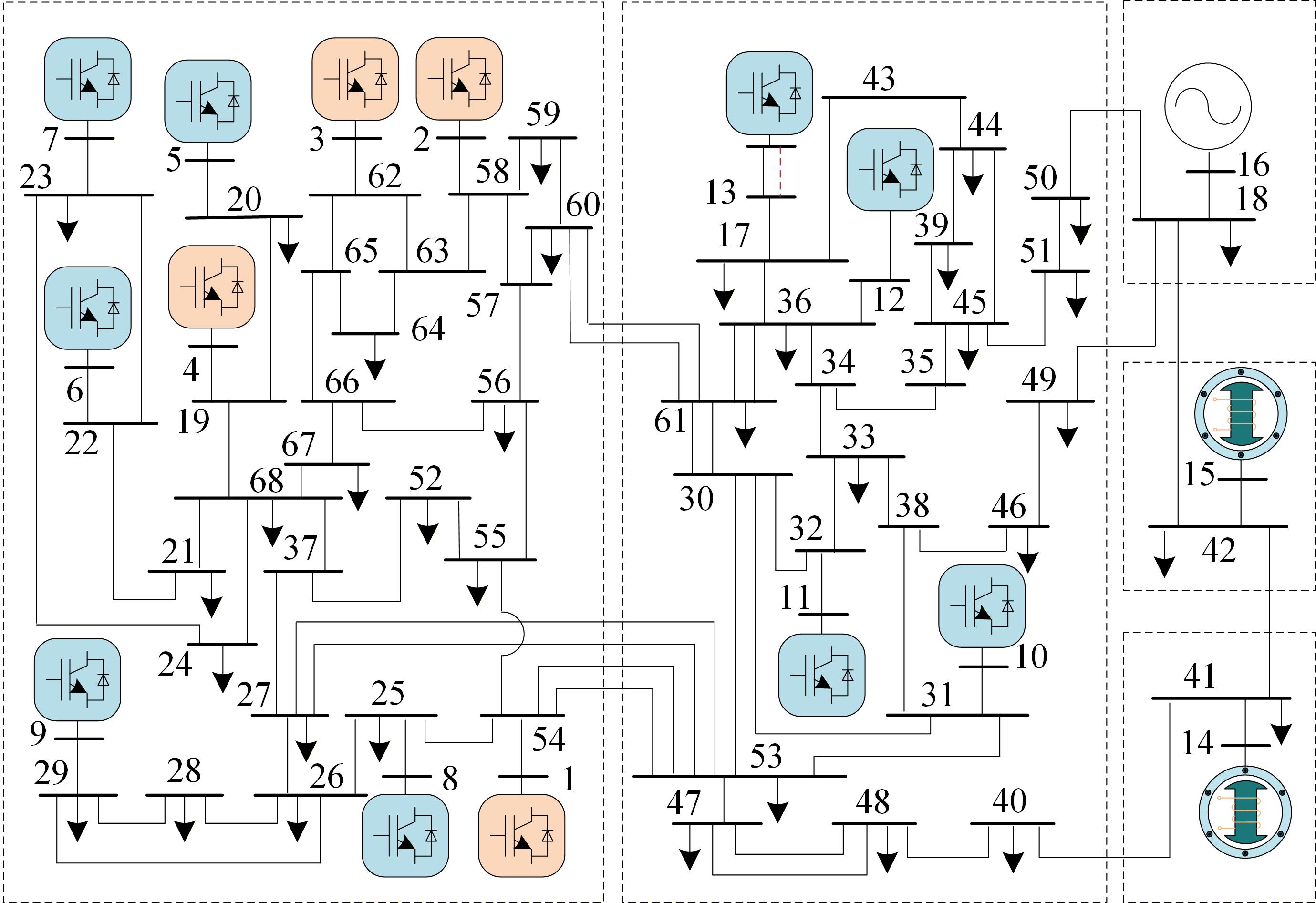}
\vspace{-6mm}
\caption{A modified 68-bus system integrated with 9 GFL converters, 4 GFM converters and 2 SGs, where \textcolor{myblue}{\raisebox{0.2ex}{\rule{0.1\linewidth}{4pt}}} denotes GFL converters, \textcolor{myorange}{\raisebox{0.2ex}{\rule{0.1\linewidth}{4pt}}} denotes GFM converters and \textcolor{mygreen}{\raisebox{0.2ex}{\rule{0.1\linewidth}{4pt}}} denotes SGs. Black arrows denote the loads.}
\label{fig:IEEE_68}
\end{figure}

\subsection{Validation of Full-Multiplier Stability Condition}

We first consider a small-signal stable operating condition of the modified 68-bus system. This case examines whether the proposed full-multiplier condition can certify stability at frequencies for which the mixed gain-phase and scalar geometric conditions do not provide a certificate.

To this end, the mixed gain-phase condition in Proposition~\ref{pro:decentralized_gain_phase} is first applied. The gain and phase areas of the GFL converters are compared with those of the network, as shown in Fig.~\ref{fig:gain_phase_stable}. Above 14.4 Hz, all GFL converters are sectorial, and their phase areas are fully contained within that of the network. Moreover, the union of their phase areas has a width less than $180^\circ$. Below 11.2 Hz, all gains of GFL converters lie below the network gain. Between 11.2 Hz and 14.4 Hz (the gray area in Fig.~\ref{fig:gain_phase_stable}), neither gain nor phase condition is satisfied, and the small-signal stability can not be guaranteed.

\begin{figure}[t]
\centering
\includegraphics[width=\columnwidth]{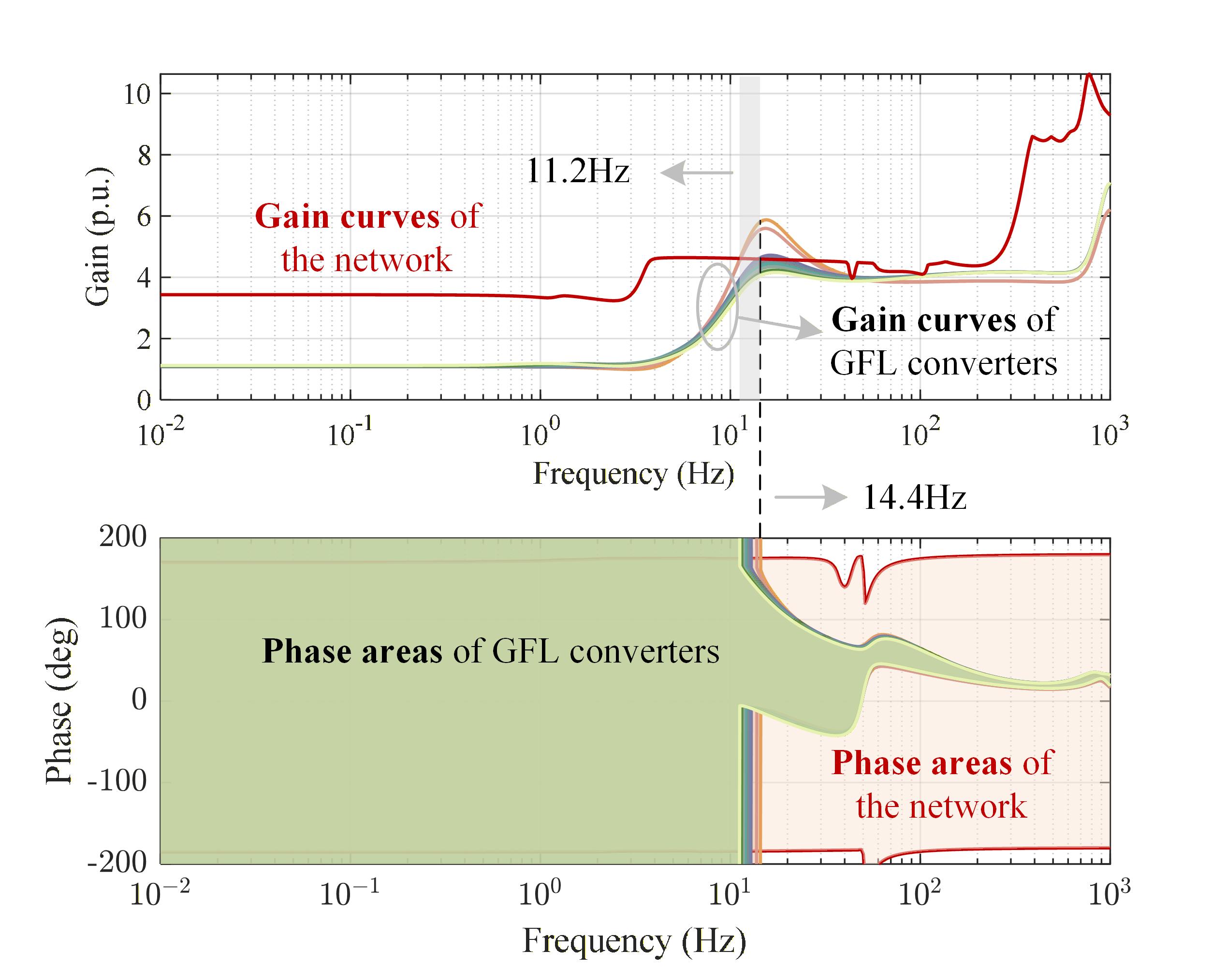}
\vspace{-8mm}
\caption{Gain curves and phase areas of the GFL converters and the network for the stable operating condition.}
\label{fig:gain_phase_stable}
\end{figure}

We then evaluate the DW shell separation condition, which is the least conservative among the scalar geometric conditions in Proposition~\ref{pro:exist_stability_conditions}. 
Due to space limitations, Fig.~\ref{fig:DW_stable} shows only the DW shells of GFL converter 13 and the network over 11.2--14.4~Hz and $\tau\in(0,1]$. An overlap between the converter and network DW shells can be observed from 11.4~Hz to 12.2~Hz.
Therefore, the DW shell separation condition fails to certify stability in this frequency range. Since the numerical range, $x$-$z$ graph, and SRG separation conditions are obtained by imposing additional restrictions on the scalar multiplier, they are more conservative than the DW shell condition and are not shown here due to space limitations.

\begin{figure}[t]
\centering
\includegraphics[width=0.9\columnwidth]{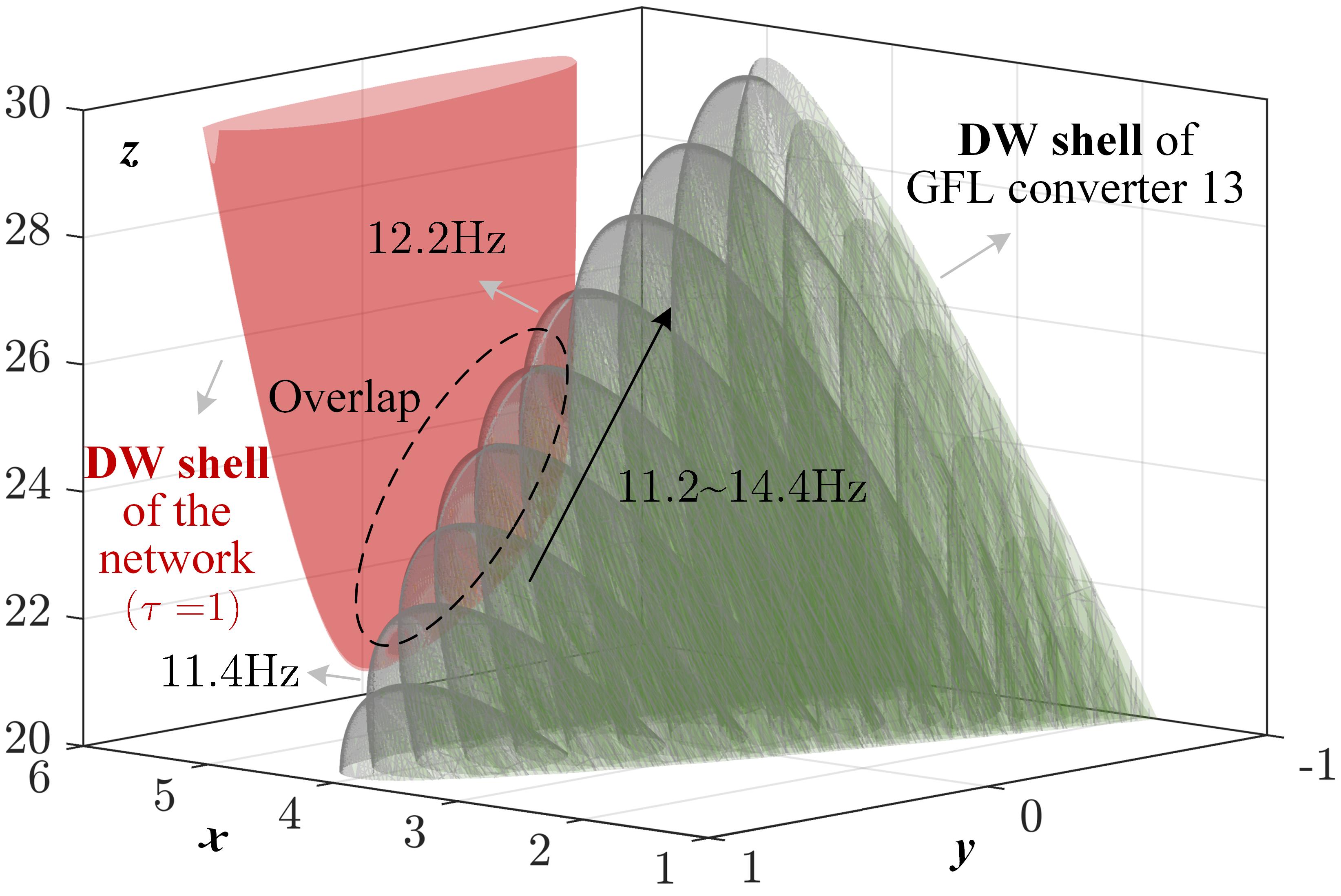}
\vspace{-3mm}
\caption{DW shells of GFL converter 13 and the network for the stable operating condition. Note that the overlap already occurs at $\tau=1$. As $\tau$ decreases, the network DW shell shifts upward and gradually moves away from the converter DW shell, so the endpoint case $\tau=1$ is sufficient to reveal the violation of the DW shell separation condition.}
\label{fig:DW_stable}
\end{figure}

The proposed Stage-I optimization is then applied to the 11.2--14.4~Hz range left uncertified by the mixed gain-phase condition. Fig.~\ref{fig:QC_stable} shows the optimized common FQC margin $\epsilon^\star(\omega)$, obtained by solving~\eqref{Opt_stage1} in MATLAB with MOSEK. The margin is positive throughout the evaluated range, with a minimum of approximately 0.14 at 11.4~Hz. Hence, the full-multiplier condition in Proposition~\ref{pro:endpoint_full_multiplier} certifies this range, including 11.4--12.2~Hz, where DW shell separation conditions fail to certify stability. Together with the gain-phase certificate outside 11.2--14.4~Hz, this provides a stability certificate for the considered operating condition and demonstrates reduced conservatism in this case.

\begin{figure}[t]
\centering
\includegraphics[width=\columnwidth]{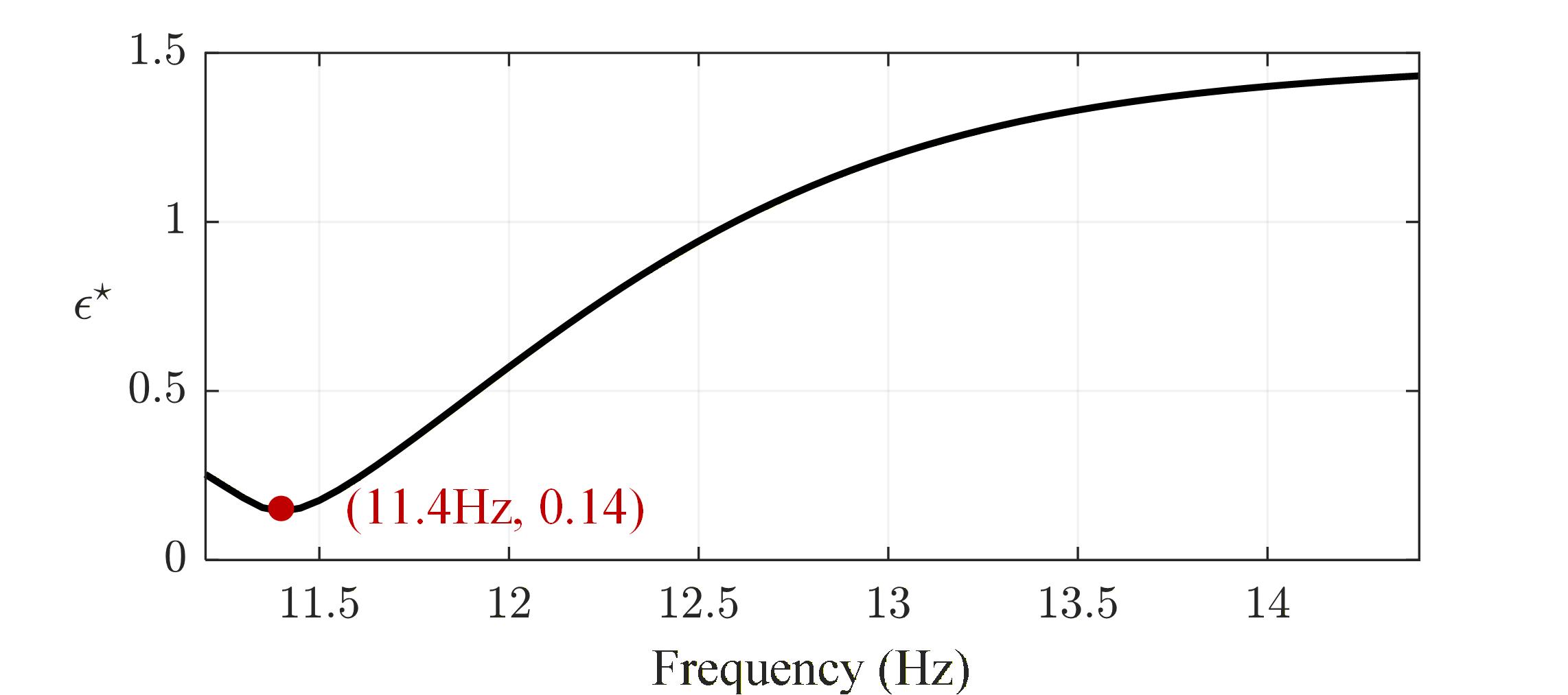}
\vspace{-6mm}
\caption{Common FQC margin of all GFL converters from 11.2Hz to 14.4Hz for the stable operating condition.}
\label{fig:QC_stable}
\end{figure}

To further validate the certificate, an electromagnetic transient (EMT) simulation is performed on this 68-bus system. At $t=0.2$~s, a $0.05$~p.u. voltage step is applied at the infinite bus and is subsequently restored at $t=0.22$~s. As shown in Fig.~\ref{fig:EMT_stable}, the converter terminal voltages and active powers converge back to the original equilibrium, consistent with the proposed stability certificate.

\begin{figure}[t]
\centering
\includegraphics[width=\columnwidth]{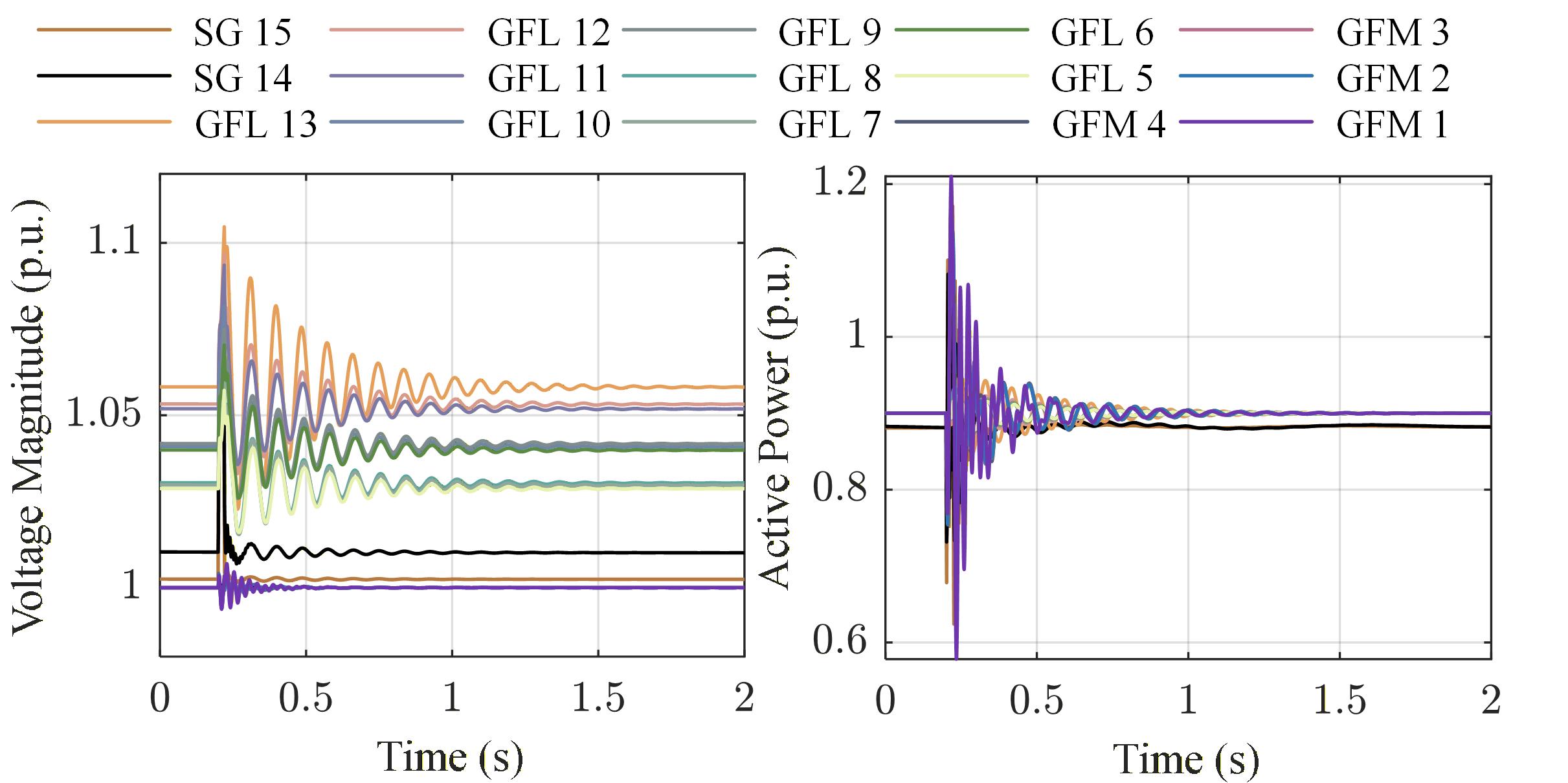}
\vspace{-8mm}
\caption{Time-domain results for the stable operating condition.}
\label{fig:EMT_stable}
\vspace{0mm}
\end{figure}

\subsection{Validation of Screening-Diagnosis Procedure}

We next consider an unstable operating condition to illustrate the complete screening-diagnosis procedure in Fig.~\ref{fig:flow_chart}. To emulate the situation of a weaker grid, all transmission line lengths are enlarged by a factor of 1.3. 
%This modification makes the interactions between the GFL converters and the network more pronounced.

The first step applies the mixed gain-phase condition as a low-cost preliminary screening tool. The gain and phase areas of the nine GFL converters are compared with those of the network, as shown in Fig.~\ref{fig:gain_phase_unstable}. Similar to the stable case in Section~V-A, the frequency axis can be divided into regions certified by the gain condition, regions certified by the phase condition, and an intermediate gray region where neither condition is satisfied. The gray region from 9.8~Hz to 14.4~Hz is therefore classified as uncertified after the first step, indicating that the conservative gain-phase test cannot determine the small-signal stability in certain frequency ranges.

\begin{figure}[t]
\centering
\includegraphics[width=\columnwidth]{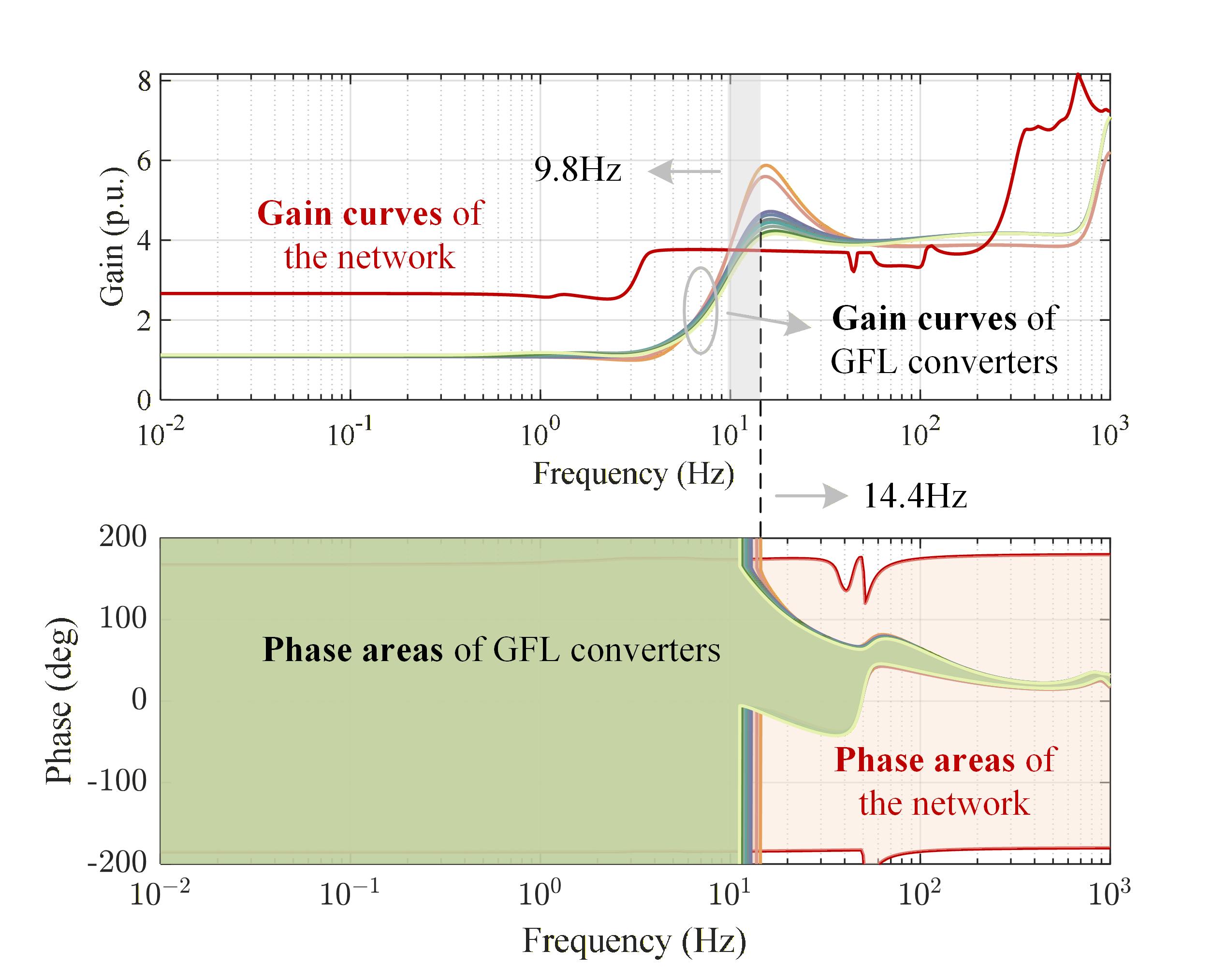}
\vspace{-8mm}
\caption{Gain curves and phase areas of the GFL converters and the network for the weak-grid operating condition.}
\label{fig:gain_phase_unstable}
\end{figure}

The second step applies Stage-I optimization to the uncertified frequency interval obtained from the gain-phase screening. By searching for the full multiplier, Stage-I further reduces the uncertified frequency range to [10.4~Hz, 11.6~Hz], as shown in the left panel of Fig.~\ref{fig:QC_unstable}. The remaining uncertified interval does not directly prove instability, but it identifies the frequency range in which the system may have a small-signal instability risk and therefore requires device-level diagnosis.

The third step applies Stage-II optimization over the remaining uncertified frequencies. As shown in the right panel of Fig.~\ref{fig:QC_unstable}, GFL converters 12 and 13 have negative indices and are identified as weak devices. The positive indices of GFL converters 5--11 indicate that their local FQC inequalities are satisfied under the same optimized multiplier.

\begin{figure}[t]
\centering
\includegraphics[width=\columnwidth]{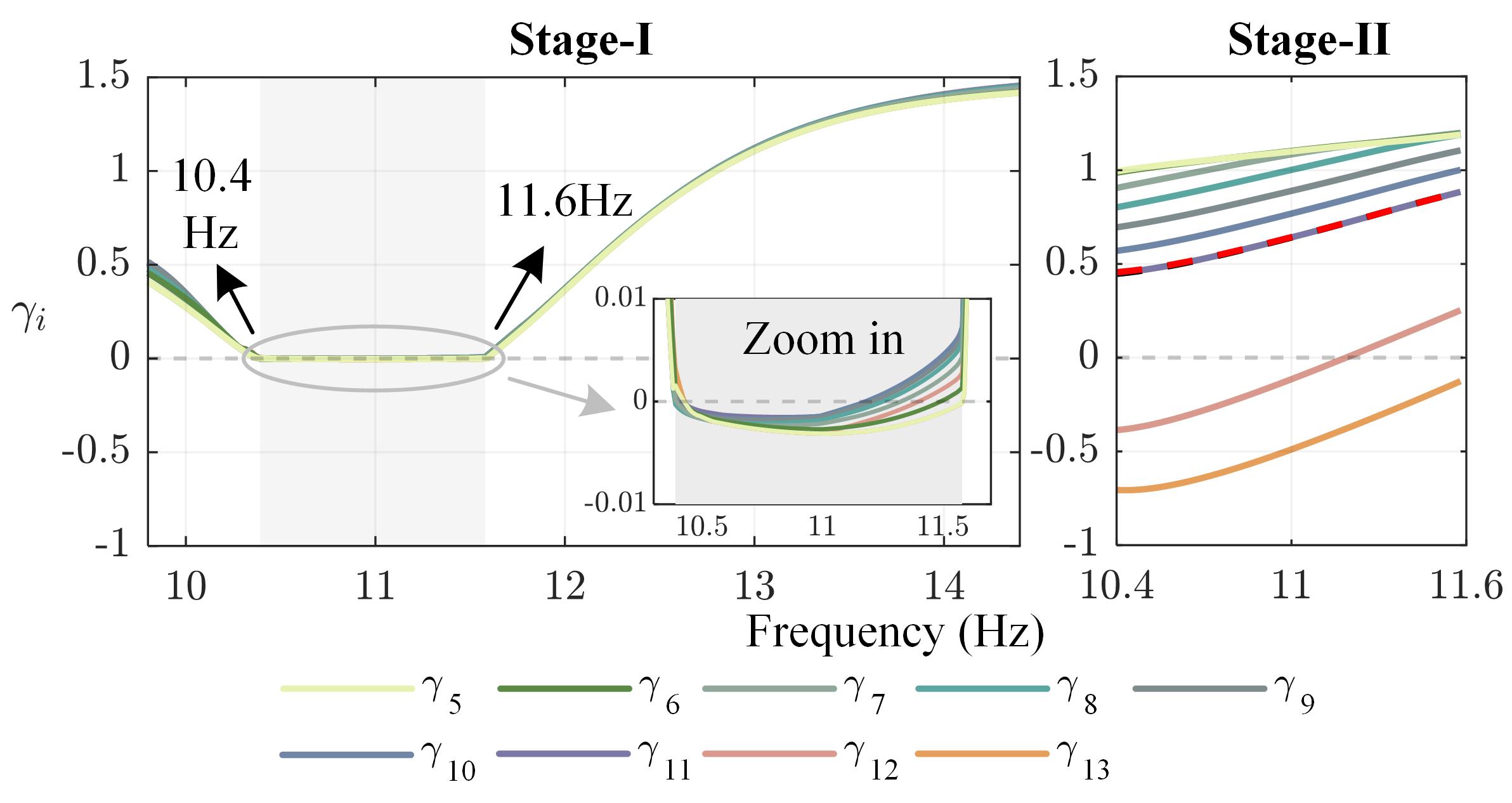}
\vspace{-8mm}
\caption{Results of the optimization for the weak-grid operating condition. Left: Stage-I optimization, with the gray-shaded region indicating the frequency range that remains uncertified. Right: Stage-II optimization applied to this uncertified frequency range for device-level diagnosis.}
\label{fig:QC_unstable}
\vspace{0mm}
\end{figure}

%This result is further verified by EMT simulation.
The single-circuit line connected to the terminal of GFL converter 13, with an impedance of $0.22$~(p.u.), corresponds to the weak-grid operating condition analyzed above. For EMT simulation, an additional parallel circuit, shown by the red dashed line in Fig.~\ref{fig:IEEE_68}, is added to obtain a stable pre-disturbance condition with an equivalent impedance of $0.11$~(p.u.). At $t=0.2$~s,  this additional circuit is tripped, increasing the equivalent impedance back to $0.22$~(p.u.) and restoring the weak-grid condition. As a result, the system indeed exhibits small-signal instability, as shown in the left panel of Fig.~\ref{fig:EMT_S_D_Procedure}. Guided by the Stage-II diagnosis, the dynamics of GFL converters 12 and 13 are reshaped to match the parameters of GFL converter 11, as indicated by the red dashed lines in Fig.~\ref{fig:QC_unstable}. After this reshaping, the proposed FQC-based stability condition can certify the stability of the modified system. The EMT simulation in the right panel of Fig.~\ref{fig:EMT_S_D_Procedure} under the same disturbance also shows that the system responses remain damped and converge to an equilibrium, confirming the effectiveness of the proposed screening-diagnosis procedure.

\begin{figure}[t]
\centering
\includegraphics[width=\columnwidth]{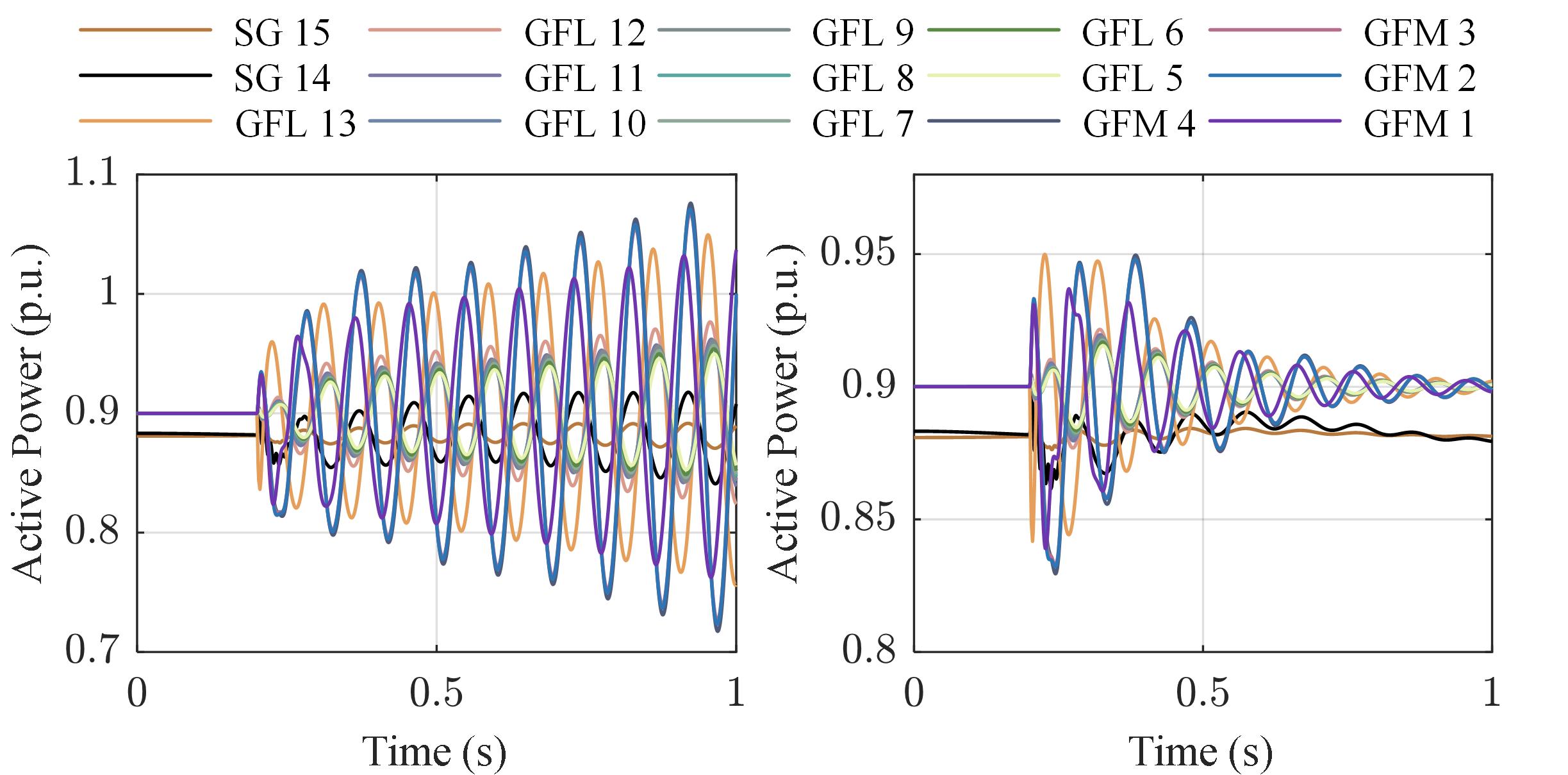}
\vspace{-6mm}
\caption{Time-domain results for the weak-grid condition. Left: active power response before reshaping. Right: active power response after reshaping.}
\label{fig:EMT_S_D_Procedure}
\vspace{3mm}
\end{figure}

\section{Conclusions}\label{sec:conclusion}
This paper proposed a scalable small-signal stability assessment method based on FQC for power electronics-dominated power systems. Several existing stability conditions have been shown to be special cases of the FQC-based condition with scalar multipliers, and the decentralization requirements of different stability conditions have been clarified. In particular, general geometric conditions require a common separating hyperplane for decentralized verification, whereas gain-phase conditions can be decentralized without such a construction. An exception arises for the \(x\)-\(z\) graph condition in networks with identical \(R/X\) ratios, where decentralized verification does not require a common-hyperplane search. To further reduce conservatism, scalar multipliers have been extended to full multipliers, leading to a full-multiplier scalable stability condition. Based on this condition, an optimization problem has been formulated for providing stability certification and performing device-level diagnosis. This optimization enables the computation of quantitative stability indices without any graphical inspection (required by DW shell stability conditions). It has been further combined with the mixed gain-phase condition to form a hierarchical screening-diagnosis procedure that improves the efficiency of stability assessment. Case studies have demonstrated that the proposed method is less conservative than existing criteria and can identify problematic devices associated with potential instability risks. Future work will extend the proposed FQC-based stability conditions toward systematic robust and optimal control design for heterogeneous devices, with the aim of reshaping device-level frequency-domain characteristics and enhancing the stability of power electronics-dominated power systems.

\newpage

\appendices
\section{Proof of Lemma~\ref{lem:QC_stability}}\label{app:proof_1}
\begin{proof}
For $\tau\in(0,1]$, suppose that $-1\in \lambda\bigl(\tau\bm{Y}_c(j\omega)\bm{Z}_g(j\omega)\bigr)$, where $\lambda(\cdot)$ denotes the set of eigenvalues of a matrix, which means that $\bm{I}+\tau\bm{Y}_c(j\omega)\bm{Z}_g(j\omega)$ is not invertible, i.e., $\det\bigl(\bm{I}+\tau\bm{Y}_c(j\omega)\bm{Z}_g(j\omega)\bigr)=0$. Then there exists a nonzero vector $\bm{w}(\omega,\tau)\in\mathbb{C}^{2n}$ such that $\bigl(\bm{I}+\tau\bm{Y}_c(j\omega)\bm{Z}_g(j\omega)\bigr)\bm{w}(\omega,\tau)=0$. Let $\bm{z}(\omega,\tau)=\tau\bm{Z}_g(j\omega)\bm{w}(\omega,\tau)$, then the previous equality becomes $\bm{w}(\omega,\tau)+\bm{Y}_c(j\omega)\bm{z}(\omega,\tau)=0$ and hence $\bm{w}(\omega,\tau)=-\bm{Y}_c(j\omega)\bm{z}(\omega,\tau)$. Define $\bm{\sigma}(\omega,\tau)=\bigl[\begin{smallmatrix}\bm{w}(\omega,\tau)\\\bm{z}(\omega,\tau)\end{smallmatrix}\bigr]$. After pre- and post-multiplying (\ref{IQC_1}) by $\bm{w}(\omega,\tau)^*$ and $\bm{w}(\omega,\tau)$, and (\ref{IQC_2}) by $\bm{z}(\omega,\tau)^*$ and $\bm{z}(\omega,\tau)$, respectively, we obtain $\bm{\sigma}(\omega,\tau)^*\bm{\Pi}(\omega,\tau)\bm{\sigma}(\omega,\tau)\geq0$ and $\bm{\sigma}(\omega,\tau)^* \bm{\Pi}(\omega,\tau)\bm{\sigma}(\omega,\tau)\leq -\epsilon(\omega,\tau)\bm{z}(\omega,\tau)^*\bm{z}(\omega,\tau)$. Since $\bm{z}(\omega,\tau)\neq 0$ and $\epsilon(\omega,\tau)>0$, this is a contradiction. Therefore, if conditions (\ref{IQC_1}) and (\ref{IQC_2}) are satisfied, then $\det\bigl(\bm{I}+\tau\bm{Y}_c(j\omega)\bm{Z}_g(j\omega)\bigr)\neq0$ for all $\omega\in[0,\infty)$ and $\tau\in(0,1]$.

Further, the condition $\det\bigl(\bm{I}+\tau\bm{Y}_c(j\omega)\bm{Z}_g(j\omega)\bigr)\neq0$ ensures that the characteristic loci do not pass through the origin during the continuous variation of $\tau$ from $0$ to $1$. This implies that the Nyquist plot $\det\bigl(\bm{I}+\bm{Y}_c(j\omega)\bm{Z}_g(j\omega)\bigr)$ does not encircle the origin as $\omega$ varies from $-\infty$ to $+\infty$. Therefore, according to the Nyquist stability criterion, the closed-loop interconnected system is stable when the open-loop systems $\bm{Y}_c(s)$ and $\bm{Z}_g(s)$ are stable. This completes the proof.
\end{proof}

\section{}\label{app:exist_stability_conditions}
This appendix focuses on how the stability conditions in Proposition~\ref{pro:exist_stability_conditions} can be derived from \eqref{IQC_DW_Zg} and \eqref{IQC_DW_Yc}.

(1) \emph{Davis-Wielandt (DW) shell separation}:

Let $a\in\mathbb{R}$, $b\in\mathbb{C}$, and $c\in\mathbb{R}$. By pre- and post-multiplying both sides by an arbitrary unit vector $\bm{u}\in\mathbb{C}^{2n}$, we obtain
\begin{equation*}
a+2\Re(b\bm{u}^*\tau\bm{Z}_g(j\omega)^*\bm{u})-c\|\tau\bm{Z}_g(j\omega)\bm{u}\|^2 \ge 0,
\end{equation*}
\begin{equation*}
a\|\bm{Y}_c(j\omega)\bm{u}\|^2-2\Re(b\bm{u}^*\bm{Y}_c(j\omega)\bm{u})-c\leq -\epsilon.
\end{equation*}

According to the definition of the DW shell in \eqref{Equ_DW}, let $z_g=\bm{u}^*\tau\bm{Z}_g(j\omega)\bm{u}$, $\nu_g=\|\tau\bm{Z}_g(j\omega)\bm{u}\|^2$, $z_c=\bm{u}^*\bm{Y}_c(j\omega)\bm{u}$, $\nu_c=\|\bm{Y}_c(j\omega)\bm{u}\|^2$. It then follows that $(z_g,\nu_g)\in\mathcal{DW}(\tau\bm{Z}_g(j\omega))$ and $(z_c,\nu_c)\in\mathcal{DW}(\bm{Y}_c(j\omega))$. By further invoking the properties of the DW shell in~\cite{ref_phantom_DW_shell}, namely, $(\hat{z}_g,\hat{\nu}_g)=(\frac{\overline{z_g}}{\nu_g},\frac{1}{\nu_g})\in\mathcal{DW}^{-1}(\tau\bm{Z}_g(j\omega))$ and $(\hat{z}_c,\hat{\nu}_c)=(-z_c,\nu_c)\in\mathcal{DW}(-\bm{Y}_c(j\omega))$, we have
\begin{subequations}
\begin{align}
a\hat{\nu}_g+2\Re(b\hat{z}_g) - c&\ge 0,
\label{IQC_DW_1}\\
a\hat{\nu}_c+2\Re(b\hat{z}_c) - c&\leq -\epsilon.
\label{IQC_DW_2}
\end{align}
\end{subequations}

By comparing \eqref{IQC_DW_1} and \eqref{IQC_DW_2}, define the affine function $F(z,\nu)=2\Re(bz)+a\nu$. It follows that $F(\hat{z}_g,\hat{\nu}_g)\ge c$ for all $(\hat{z}_g,\hat{\nu}_g)\in\mathcal{DW}^{-1}(\tau\bm{Z}_g(j\omega))$, whereas $F(\hat{z}_c,\hat{\nu}_c)\leq c-\epsilon$ for all $(\hat{z}_c,\hat{\nu}_c)\in\mathcal{DW}(-\bm{Y}_c(j\omega))$. Therefore, for any $x$ satisfying $c-\epsilon<x<c$, the hyperplane $H_d=\{(z,\nu)\in\mathbb{C}\times\mathbb{R}_+:2\Re(bz)+a\nu=x,a\in\mathbb{R},b\in\mathbb{C}\}$ strictly separates these two sets. Specifically, $\mathcal{DW}^{-1}(\tau\bm{Z}_g(j\omega))$ lies in the half-space $F(\hat{z}_g,\hat{\nu}_g) > x$, while $\mathcal{DW}(-\bm{Y}_c(j\omega))$ lies in the half-space $F(\hat{z}_c,\hat{\nu}_c) < x$. Hence, the two DW shells are disjoint, i.e., $\mathcal{DW}(-\bm{Y}_c(j\omega))\cap\mathcal{DW}^{-1}(\tau\bm{Z}_g(j\omega))=\emptyset$, recovering condition i) in Proposition~\ref{pro:exist_stability_conditions}.

(2) \emph{Numerical range and $x$-$z$ graph separation}:

Let $a=0$, $b\in\mathbb{C}$, $c\in\mathbb{R}$. Then, by the definition of the numerical range in \eqref{Equ_DW} and following a derivation similar to that for the DW-shell separation condition, we have
\begin{subequations}
\begin{align}
2\Re(b\hat{z}_g) - c&\ge 0, \quad \forall \hat{z}_g\in\mathcal{W}^{-1}(\tau\bm{Z}_g(j\omega)),
\label{IQC_W_1}\\
2\Re(b\hat{z}_c) - c&\leq -\epsilon, \quad \forall \hat{z}_c\in\mathcal{W}(-\bm{Y}_c(j\omega)).
\label{IQC_W_2}
\end{align}
\end{subequations}

Similarly, let $a, b, c\in\mathbb{R}$. Then, by the definition of the $x$-$z$ graph in \eqref{Equ_DW}, for all $(\Re(\hat{z}_g),\hat{\nu}_g)\in\mathcal{P}^{-1}(\tau\bm{Z}_g(j\omega))$ and $(\Re(\hat{z}_c),\hat{\nu}_c)\in\mathcal{P}(-\bm{Y}_c(j\omega))$, we have
\begin{subequations}
\begin{align}
a\hat{\nu}_g+2b\Re(\hat{z}_g) - c&\ge 0,
\label{IQC_P_1}\\
a\hat{\nu}_c+2b\Re(\hat{z}_c) - c&\leq -\epsilon.
\label{IQC_P_2}
\end{align}
\end{subequations}

Eqs.~\eqref{IQC_W_1} and~\eqref{IQC_W_2} imply that there exists a separating hyperplane $H_n=\{z\in\mathbb{C}:2\Re(bz)=x,b\in\mathbb{C}\}$ such that the two numerical range sets $\mathcal{W}^{-1}(\tau\bm{Z}_g(j\omega))$ and $\mathcal{W}(-\bm{Y}_c(j\omega))$ lie on opposite sides of $H_n$. Therefore, these two sets are disjoint, namely, $\mathcal{W}(-\bm{Y}_c(j\omega))\cap\mathcal{W}^{-1}(\tau\bm{Z}_g(j\omega))=\emptyset$, recovering condition ii) in Proposition~\ref{pro:exist_stability_conditions}. In the same way, according to \eqref{IQC_P_1} and \eqref{IQC_P_2}, there exists a separating hyperplane in the $x$-$z$ plane, i.e., $H_p=\{(\Re(z),\nu)\in\mathbb{R}\times \mathbb{R}_+:2b\Re(z)+a\nu=x,a,b\in\mathbb{R}\}$, such that the two $x$-$z$ graph sets $\mathcal{P}^{-1}(\tau\bm{Z}_g(j\omega))$ and $\mathcal{P}(-\bm{Y}_c(j\omega))$ lie on opposite sides of $H_p$. Therefore, these two sets are also disjoint, namely, $\mathcal{P}(-\bm{Y}_c(j\omega))\cap\mathcal{P}^{-1}(\tau\bm{Z}_g(j\omega))=\emptyset$, recovering condition iii) in Proposition~\ref{pro:exist_stability_conditions}.

(3) \emph{Scaled relative graph (SRG) separation}:

Let $a, b, c\in\mathbb{R}$ and $a=a_1-a_2$, $c=a_1+a_2$, $\xi=a_1+jb$. Combining this with the SRG transformation in \eqref{com_SRG}, we have
\begin{equation*}
-\Re(\overline{\xi}f(\tau\bm{Z}_g(j\omega))) - a_2\bm{I} \succeq 0,
\end{equation*}
\begin{equation*}
\Re(\xi f(\bm{Y}_c(j\omega))^*) - a_2\bm{I}\preceq -\epsilon\bm{I}.
\end{equation*}

Similar to the derivations of \eqref{IQC_DW_1}, \eqref{IQC_DW_2}, \eqref{IQC_W_1}, and \eqref{IQC_W_2}, let $z_g^{s}\in\mathcal{W}(f(\tau\bm{Z}_g(j\omega)))$ and $z_c^{s}\in\mathcal{W}(f(\bm{Y}_c(j\omega)))$. Together with $\hat{z}_g^{s}=-\overline{z_g^{s}}\in\mathcal{W}(f(\frac{1}{\tau}\bm{Y}_g(j\omega)))$ and $\hat{z}_c^{s}=\overline{z_c^{s}}\in\mathcal{W}(f(-\bm{Y}_c(j\omega)))$, we obtain
\begin{subequations}
\begin{align}
\Re(\xi\hat{z}_g^{s}) - a_2&\ge 0,
\label{IQC_SRG_1}\\
\Re(\xi\hat{z}_c^{s}) - a_2&\leq -\epsilon.
\label{IQC_SRG_2}
\end{align}
\end{subequations}

According to \eqref{IQC_SRG_1} and \eqref{IQC_SRG_2}, there exists a separating hyperplane $H_{s}=\{z\in\mathbb{C}:\Re(\xi z)=x,\xi\in\mathbb{C}\}$ such that the two sets $\mathcal{W}(f(\frac{1}{\tau}\bm{Y}_g(j\omega)))$ and $\mathcal{W}(f(-\bm{Y}_c(j\omega)))$ lie on opposite sides of $H_{s}$. This implies that $\mathcal{W}(f(\frac{1}{\tau}\bm{Y}_g(j\omega)))\cap\mathcal{W}(f(-\bm{Y}_c(j\omega)))=\emptyset$. Furthermore, according to \eqref{com_SRG}, for each point $z\in\mathcal{W}\!\bigl(f(\bm{M})\bigr)$, the corresponding point $g(z)\in\mathrm{SRG}(\bm{M})$ satisfies $\Re(z)=\frac{|g(z)|^2-1}{|g(z)|^2+1}$ and $\Im(z)=-\frac{2\Re(g(z))}{|g(z)|^2+1}$. Substituting these relationships into \eqref{IQC_SRG_1} and \eqref{IQC_SRG_2}, we obtain
\begin{subequations}
\begin{align}
a|r_g|^2-c+2b\Re(r_g)&\ge 0,
\label{IQC_SRG_3}\\
a|r_c|^2-c+2b\Re(r_c)&\le -\epsilon,
\label{IQC_SRG_4}
\end{align}
\end{subequations}
where $r_g=g(\hat{z}_g^{s})\in\mathrm{SRG}^{-1}(\tau\bm{Z}_g(j\omega))$ and $r_c=g(\hat{z}_c^{s})\in\mathrm{SRG}(-\bm{Y}_c(j\omega))$.

Similarly, \eqref{IQC_SRG_3} and \eqref{IQC_SRG_4} imply that the separating boundary $H_{r}=\{r\in\mathbb{C}:2b\Re(r)+a|r|^2=x,\ a,b\in\mathbb{R}\}$ separates the two sets $\mathrm{SRG}^{-1}(\tau\bm{Z}_g(j\omega))$ and $\mathrm{SRG}(-\bm{Y}_c(j\omega))$, which lie on opposite sides of $H_{r}$, i.e., $\mathrm{SRG}^{-1}(\tau\bm{Z}_g(j\omega))\cap\mathrm{SRG}(-\bm{Y}_c(j\omega))=\emptyset$, recovering condition iv) in Proposition~\ref{pro:exist_stability_conditions}.

(4) \emph{Small gain and small phase theorem}:

Let $a>0$, $b=0$, and $c\in\mathbb{R}$, we have
\begin{subequations}
\begin{align}
c&\leq\frac{a}{\sigma_1^2(\bm{Z}_g(j\omega))},
\label{IQC_gain_1}\\
\epsilon-c&\leq-a\sigma_1^2(\bm{Y}_c(j\omega)).
\label{IQC_gain_2}
\end{align}
\end{subequations}

Adding \eqref{IQC_gain_1} and \eqref{IQC_gain_2} yields $\sigma_1(\bm{Y}_c(j\omega))\sigma_1(\bm{Z}_g(j\omega))<1$, recovering condition v) in Proposition~\ref{pro:exist_stability_conditions}. Setting $b=0$ removes the dependence on the complex numerical-range coordinate and retains only the gain coordinate. The separating hyperplane therefore degenerates to the one-dimensional boundary $H_\sigma=\{v\in\mathbb{R}_+:av=x,a\in\mathbb{R}\}$, which reduces the separation test to a comparison of the converter-side and network-side gain bounds. 

Let $a=0$, $c=0$, and let $b\in\mathbb{C}$ be written as $b=b_1e^{jb_2}$, where $b_1>0$ and $b_2\in\mathbb{R}$. We have
\begin{subequations}
\begin{align}
\frac{\bm{Z}_g(j\omega)^*e^{jb_2}+e^{-jb_2}\bm{Z}_g(j\omega)}{2}&\succeq 0,\label{IQC_phase_1}\\
\frac{e^{jb_2}\bm{Y}_c(j\omega)+\bm{Y}_c(j\omega)^*e^{-jb_2}}{2}&\succ 0.\label{IQC_phase_2}
\end{align}
\end{subequations}

Eqs. \eqref{IQC_phase_1} and \eqref{IQC_phase_2} can be further developed as $\Re(e^{-jb_2}z_g^\phi) \ge 0$ and $\Re(e^{jb_2}z_c^\phi) > 0$, respectively, where $z_g^\phi\in\mathcal{W}(\bm{Z}_g(j\omega))$ and $z_c^\phi\in\mathcal{W}(\bm{Y}_c(j\omega))$. This implies that, at each frequency $\omega$, one seeks a rotation angle $b_2$ such that the numerical range of $\bm{Z}_g(j\omega)$ lies in the closed right half-plane after being rotated clockwise by $b_2$, while the numerical range of $\bm{Y}_c(j\omega)$ lies in the open right half-plane after being rotated counterclockwise by the same angle. Expanding these inequalities gives the corresponding boundaries \(H_\phi^\pm=\left\{z^\phi\in\mathbb C:\Re(z^\phi)\cos b_2\pm\Im(z^\phi)\sin b_2=0\right\}\), with \(H_\phi^+\) and \(H_\phi^-\) corresponding to the network and converter constraints, respectively. Both lines pass through the origin and are symmetric about the real axis.

Thus, $\bm{Z}_g(j\omega)$ and $\bm{Y}_c(j\omega)$ must be sectorial and satisfy $\phi_k(\bm{Z}_g(j\omega))-b_2\in[-\frac{\pi}{2},\frac{\pi}{2}]$ and $\phi_k(\bm{Y}_c(j\omega))+b_2\in(-\frac{\pi}{2},\frac{\pi}{2})$, respectively (\(k=1,\cdots,2n\)). Combining the upper and lower phase bounds separately yields $\phi_1(\bm{Y}_c(j\omega))+\phi_1(\bm{Z}_g(j\omega))<\pi$ and $\phi_{2n}(\bm{Y}_c(j\omega))+\phi_{2n}(\bm{Z}_g(j\omega))>-\pi$, recovering condition vi) in Proposition~\ref{pro:exist_stability_conditions}. It is worth noting that, if we further set $b_2=0$, this condition reduces to the passivity theorem.

For conditions i)–iv) in Proposition~\ref{pro:exist_stability_conditions}, the separation results in~\cite{ref_phantom_DW_shell} ensure the existence of the corresponding boundaries in Table~\ref{tab:comparison}. Choosing their coefficients and a positive separation margin gives feasible multipliers satisfying \eqref{IQC_DW_Zg} and \eqref{IQC_DW_Yc}. Conditions v) and vi) similarly admit the corresponding gain and phase multipliers. Closed-loop stability then follows directly from Corollary~\ref{coro:QC_stability}.

\section{Proof of Proposition~\ref{pro:scalable_xz_envelope}}\label{app:proof_2}
\begin{proof}
For each $\tau\in(0,1]$, choose $a=-1$, $b=S/(2\tau)$, and $c=P/\tau^2$. After substituting these multipliers into~\eqref{IQC_DW_Zg} and applying a congruence transformation with \(\bm Y_g=\bm Z_g^{-1}\), the network-side inequality is equivalent to \(-(\bm{Y}_g-\lambda_-\bm{I})(\bm{Y}_g-\lambda_+\bm{I})\succeq0\), which holds because every eigenvalue of $\bm{Y}_g$ lies in $[\lambda_-,\lambda_+]$. Under the same multipliers,~\eqref{Equ_local_xz_envelope} is precisely the local converter-side FQC. Since $\bm{Y}_c(j\omega)$ is block diagonal, these local inequalities assemble into~\eqref{IQC_DW_Yc}. Thus, the conditions of Corollary~\ref{coro:QC_stability} hold, which proves closed-loop stability.
\end{proof}

\section{Proof of Proposition~\ref{pro:decentralized_gain_phase}}\label{app:proof_5}
\begin{proof}
If~\eqref{Equ_decentralized_gain} holds, the block-diagonal structure of $\bm{Y}_c(j\omega)$ gives \(\sigma_1(\bm{Y}_c(j\omega))=\max_i\sigma_1(\bm{Y}_{c,i}(j\omega))\). Combining with \(\sigma_1(\bm{Z}_g(j\omega))=\sigma_{2n}^{-1}(\bm{Y}_g(j\omega))\) gives $\sigma_1(\bm{Y}_c(j\omega))\sigma_1(\bm{Z}_g(j\omega))<1$, satisfying the gain condition in Proposition~\ref{pro:exist_stability_conditions}.

If~\eqref{Equ_decentralized_phase} holds, conditions (a) and (b) place all converter phase areas within a common sector of width less than $\pi$, ensuring that the numerical range of the block-diagonal matrix $\bm{Y}_c(j\omega)$ lies in an open half-plane. Thus, $\bm{Y}_c(j\omega)$ is sectorial, and the block-diagonal structure gives $\phi_1(\bm{Y}_c(j\omega))=\max_i \phi_1(\bm{Y}_{c,i}(j\omega))$ and $\phi_{2n}(\bm{Y}_c(j\omega))=\min_i \phi_2(\bm{Y}_{c,i}(j\omega))$. Conditions (c) and (d) give $\phi_1(\bm{Y}_c(j\omega))+\phi_1(\bm{Z}_g(j\omega))<\pi$ and $\phi_{2n}(\bm{Y}_c(j\omega))+\phi_{2n}(\bm{Z}_g(j\omega))>-\pi$, satisfying the phase condition in Proposition~\ref{pro:exist_stability_conditions}. Thus, one of the sufficient conditions in Proposition~\ref{pro:exist_stability_conditions} holds at each frequency, proving closed-loop stability.
\end{proof}

\section{Proof of Proposition~\ref{pro:shared_full_multiplier}}\label{app:proof_3}
\begin{proof}
Since $\bm{Y}_g=\bm{Z}_g^{-1}$, a congruence transformation with $\bm{Y}_g$ makes the network-side inequality \eqref{IQC_3} equivalent to \eqref{Equ_Hg_tau}. Under the common full multiplier, the block-diagonal structure of $\bm{Y}_c$ makes the converter-side inequality \eqref{IQC_4} equivalent to \eqref{Equ_diag_Wi}. The latter holds if $\bm{W}_i\succeq\epsilon\bm{I}_2$ for every $i=1,\ldots,n$. Thus, the conditions of Corollary~\ref{coro:QC_stability} are satisfied, which proves closed-loop stability.
\end{proof} 

\section{Proof of Proposition~\ref{pro:endpoint_full_multiplier}}\label{app:proof_4}
\begin{proof}
For any $\tau\in(0,1]$, \(\bm{H}_g(\omega,\tau)\) in \eqref{Equ_Hg_tau} can be written as
\begin{equation*}
\bm{H}_g(\omega,\tau)
=\tau\bm{H}_g(\omega,1)
+(1-\tau)\bm{Y}_g^*\bm{A}\bm{Y}_g+(\tau-\tau^2)\bm{C}.
\end{equation*}

Since $\bm{H}_g(\omega,1)\succeq0$, $\bm{A}\succeq0$, $\bm{C}\succeq0$, and $\tau-\tau^2\ge0$ for $\tau\in(0,1]$, it follows that $\bm{H}_g(\omega,\tau)\succeq0$ for all $\tau\in(0,1]$. Since $\bm{Y}_c(j\omega)$ is block diagonal and the same matrices $\bm{A}_0$, $\bm{B}_0$, and $\bm{C}_0$ are used for all converters, the inequalities $\bm{W}_i\succeq\epsilon\bm{I}$ imply the converter-side inequality \eqref{IQC_4}. Therefore, the conditions of Corollary~\ref{coro:QC_stability} hold, which proves closed-loop stability.
\end{proof}
%\begin{IEEEbiographynophoto}{Jane Doe}
%Biography text here without a photo.
%\end{IEEEbiographynophoto}

%\begin{IEEEbiography}[{\includegraphics[width=1in,height=1.25in,clip,keepaspectratio]{fig1.png}}]{IEEE Publications Technology Team}
%In this paragraph you can place your educational, professional background and research and other interests.\end{IEEEbiography}

\end{document}